\DeclarePairedDelimiter\ceil{\lceil}{\rceil}
\DeclareMathOperator{\dist}{dist}
\newcommand{\greg}[1]{{\color{red} {\bf Greg:} #1}}
\newcommand{\Oish}{\widetilde{O}}
\newcommand{\eps}{\varepsilon}
\begin{document}

\title{Weighted Additive Spanners}
\author{Reyan~Ahmed\inst{1} \and
Greg~Bodwin\inst{2} \and 
Faryad~Darabi~Sahneh\inst{1} \and 
Stephen~Kobourov\inst{1} \and 
Richard~Spence\inst{1}}
\institute{Department of Computer Science, University of Arizona \and
Department of Computer Science, Georgia Institute of Technology}
\date{}
\authorrunning{Ahmed et al.}

\maketitle

\begin{abstract}
A \emph{spanner} of a graph $G$ is a subgraph $H$ that approximately preserves shortest path distances in $G$.
Spanners are commonly applied to compress computation on metric spaces corresponding to weighted input graphs.
Classic spanner constructions can seamlessly handle edge weights, so long as error is measured \emph{multiplicatively}.
In this work, we investigate whether one can similarly extend constructions of spanners with purely \emph{additive} error to weighted graphs. These extensions are not immediate, due to a key lemma about the size of shortest path neighborhoods that fails for weighted graphs. Despite this, we recover a suitable amortized version, which lets us prove direct extensions of classic $+2$ and $+4$ unweighted spanners (both all-pairs and pairwise) to $+2W$ and $+4W$ weighted spanners, where $W$ is the maximum edge weight.  Specifically, we show that a weighted graph $G$ contains all-pairs (pairwise) $+2W$ and $+4W$ weighted spanners of size $O(n^{3/2})$ and $\Oish(n^{7/5})$ ($O(np^{1/3})$ and $O(np^{2/7})$) respectively. For a technical reason, the $+6$ unweighted spanner becomes a $+8W$ weighted spanner; closing this error gap is an interesting remaining open problem. That is, we show that $G$ contains all-pairs (pairwise) $+8W$ weighted spanners of size $O(n^{4/3})$ ($O(np^{1/4})$).

\keywords{Additive spanner \and Pairwise spanner \and Shortest-path neighborhood}
\end{abstract}

\section{Introduction}
An \emph{$f(\cdot)$-spanner} of an undirected graph $G=(V,E)$ with $|V|=n$ nodes and $|E|=m$ edges is a subgraph $H$ which preserves pairwise distances in $G$ up to some error prescribed by $f$; that is,
$$\dist_H(s, t) \le f(\dist_G(s, t)) \ \text{for all nodes } s,t \in V.$$
Spanners were introduced by Peleg and Sch\"{a}ffer~\cite{doi:10.1002/jgt.3190130114} in the setting with multiplicative error of type $f(d) = cd$ for some positive constant $c$.
This setting was quickly resolved, with matching upper and lower bounds \cite{Alth90} on the sparsity of a spanner that can be achieved in general. At the other extreme are (purely) \emph{$c$-additive spanners} (or $+c$ spanners), with error of type $f(d) = d+c$. More generally, if $f(d)=\alpha d + \beta$, we say that $H$ is an \emph{$(\alpha,\beta)$-spanner}.
Intuitively, additive error is much stronger than multiplicative error; most applications involve shrinking enormous input graphs that are too large to analyze directly, and so it is appealing to avoid error that scales with distance.

Additive spanners were thus initially considered perhaps too good to be true, and they were discovered only for particular classes of input graphs \cite{liestmannp}.
However, in a surprise to the area, a seminal paper of Aingworth, Chekuri, Indyk, and Motwani \cite{Aingworth99fast} proved that nontrivial additive spanners actually exist \emph{in general}: every $n$-node undirected unweighted graph has a $2$-additive spanner on $O(n^{3/2})$ edges.
Subsequently, more interesting constructions of additive spanners were found: there are $4$-additive spanners on $\Oish(n^{7/5})$ edges \cite{chechik2013new} and $6$-additive spanners on $O(n^{4/3})$ edges \cite{baswana2010additive, knudsen2014additive}.
There are also natural generalizations of these results to the \emph{pairwise} setting, where one is given $G = (V, E)$ and a set of demand pairs $P \subseteq V \times V$, where only distances between node pairs $(s, t) \in P$ need to be approximately preserved in the spanner \cite{Kavitha15, Kavitha2017, bodwin2017linear, coppersmith2006sparse, Bodwin:2016:BDP:2884435.2884496, Hsien2018nearoptimal}.

Despite the inherent advantages of additive error, multiplicative spanners have remained the more well-known and well-applied concept elsewhere in computer science.
There seem to be two reasons for this:
\begin{enumerate}
    \item Abboud and Bodwin \cite{abboud20174} (see also \cite{HP18swat}) give examples of graphs that have no $c$-additive spanner on $O(n^{4/3 - \eps})$ edges, for any constants $c, \eps > 0$.
    Some applications call for a spanner on a near-linear number of edges, say $O(n^{1+\eps})$, and hence these must abandon additive error if they need theoretical guarantees for every possible input graph.
    However, there is some evidence that many graphs of interest bypass this barrier; e.g.\ graphs with good expansion or girth properties \cite{baswana2010additive}.
    
    \item Spanners are often used to compress metric spaces that correspond to \emph{weighted} input graphs.
    This includes popular applications in robotics \cite{MB13, DB14, DBLP:journals/ijcga/CaiK97, SSAH14}, asynchronous protocol design \cite{PU89jacm}, etc.,  and it incorporates the extremely well-studied case of Euclidean spaces which have their own suite of applications (see book \cite{Narasimhan:2007:GSN:1208237}).
    Current constructions of multiplicative spanners can handle edge weights without issue, but purely additive spanners are known for unweighted input graphs only.
\end{enumerate}

Addressing both of these points, Elkin et al.~\cite{elkin2019almost} (following \cite{elkin2005computing}) recently provided constructions of \emph{near-additive} spanners for weighted graphs.
That is, for any fixed $\eps, t > 0$, every $n$-node graph $G=(V, E, w)$ has a $(1+\varepsilon, O(W))$-spanner on $O(n^{1+1/t})$ edges, where $W$ is the maximum edge weight.\footnote{Their result is actually a little stronger: $W$ can be the maximum edge weight on the shortest path between the nodes being considered.}
This extends a classic unweighted spanner construction of Elkin and Peleg \cite{Elkin:2004:SCG:976327.984900} to the weighted setting.
Additionally, while not explicitly stated in their paper, their method can be adapted to a $+2W$ purely additive spanner on $O(n^{3/2})$ edges (extending \cite{Aingworth99fast}).

The goal of this paper is to investigate whether or not all the other constructions of spanners with purely additive error extend similarly to weighted input graphs.
As we will discuss shortly, there is a significant barrier to a direct extension of the method from \cite{elkin2019almost}.
However, we prove that this barrier can be overcome with some additional technical effort, thus leading to the following constructions.
%Naturally it will be impossible to have a general construction of a weighted $c$-additive spanner (for constant $c$), as one can take an input graph and scale up its edge weights so large that removing some edge $(u, v)$ increases $\dist(u, v)$ by more than $c$ (note that $c$-multiplicative spanners are robust to this scaling).
%One natural paradigm is thus to look for $cW$-additive spanners, where $W$ is the maximum edge weight in the input graph. %\footnote{An interesting future direction is to prove similar results with another statistic of the edge weights, besides the maximum. However, many natural attempts -- like taking $W$ to be the \emph{average} edge weight -- are defeated by the following construction: take the disjoint union of a ``hard'' scaled-up graph like before and a clique with tiny edge weights.
%The clique is just used to drive down the average edge weight.}
%Our main results are that, with some effort, the known constructions of unweighted additive spanners can indeed extend to weighted additive spanners in this way.
In these theorem statements, all edges have (not necessarily integer) edge weights in $(0,W]$. Let $p=|P|$ denote the number of demand pairs and $n=|V|$ the number of nodes in $G$.

\begin{theorem} \label{thm:intro2span}
For any $G=(V, E, w)$ and demand pairs $P$, there is a $+2W$ pairwise spanner with $O(np^{1/3})$ edges.
\end{theorem}

\begin{theorem} \label{thm:intro4span}
For any $G=(V, E, w)$ and demand pairs $P$, there is a $+4W$ pairwise spanner with $O(np^{2/7})$ edges.
In the all-pairs setting $P = V \times V$, the bound improves to $\Oish(n^{7/5})$.
\end{theorem}

These two results exactly match previous ones for unweighted graphs \cite{chechik2013new, Kavitha2017, Kavitha15, Aingworth99fast}, with $+2W$ ($+4W$) in place of $+2$ ($+4$).
Theorem \ref{thm:intro2span} is partially tight in the following sense: it implies that $O(n^{3/2})$ edges are needed for a $+2W$ spanner when $p=O(n^{3/2})$, and neither of these values can be unilaterally improved.
Our next two results are actually a bit weaker than the corresponding unweighted ones \cite{Kavitha2017, Cygan13, Pettie09}: for a technical reason, we take on slightly more error in the weighted setting (the corresponding unweighted results have $+6$ and $+2$ error respectively).

\begin{theorem}
For any $G =(V,E,w)$ and demand pairs $P$, there is a $+8W$ pairwise spanner with $O(np^{1/4})$ edges.
In the all-pairs setting $P = V \times V$, the bound improves to $O(n^{4/3})$.
\end{theorem}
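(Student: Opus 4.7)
The plan is to follow the path-buying framework used in the unweighted $+6$ all-pairs spanner \cite{baswana2010additive, Cygan13} and the $+2$ pairwise spanner \cite{Kavitha2017, Pettie09}, substituting the paper's amortized weighted shortest-path-neighborhood lemma (from the proofs of Theorems~\ref{thm:intro2span} and~\ref{thm:intro4span}) for the unweighted counting bound. First I would construct an \emph{initial} spanner by picking a degree threshold $d = \Theta(n^{1/3})$ in the all-pairs setting or $d = \Theta(p^{1/4})$ in the pairwise setting, adding every edge incident to a vertex of degree at most $d$, and greedily clustering the high-degree vertices into $O(n/d)$ clusters of size $\Omega(d)$, with one edge from each high-degree vertex to a representative of each adjacent cluster. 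This yields $O(nd) = O(n^{4/3})$ or $O(np^{1/4})$ edges, respectively.

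Next I would run the standard path-buying procedure: enumerate the pairs $(s,t) \in P$ and add $\pi_G(s,t)$ to the spanner whenever the number of previously unspanned cluster pairs it connects exceeds a constant times the number of edges of $\pi_G(s,t)$ not already in $H$. A token-accounting argument, essentially unchanged from the unweighted proofs, caps the total edges added during this phase at $O(n^{4/3})$ or $O(np^{1/4})$.

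For the distance analysis, I would fix an unbought pair $(s,t)$ and decompose $\pi_G(s,t)$ into $O(1)$ sub-segments alternating between portions already carried by the initial spanner and ``bridges'' that must be rerouted through cluster representatives already connected in $H$. Each bridge incurs an additive error bounded by the total weight of the piece of $\pi_G(s,t)$ it shortcuts, and the amortized weighted-neighborhood lemma shows this weight is $O(W)$ per bridge. Summing over the constant number of bridges in the detour gives an additive bound of the form $+cW$.

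The main obstacle is that this counting yields $c = 8$ rather than the $c = 6$ (or smaller) value achieved in the unweighted setting. The discrepancy arises because the amortized lemma bounds the total \emph{weight} of a shortest path inside a single cluster but does not merge the two boundary edges (each of weight up to $W$) sitting at the endpoints of a bridge; the unweighted argument exploits exactly such a merging to save $+2$, and no analog is available here. This accounts for the extra $+2W$ of error, and as flagged in the introduction, closing the resulting additive gap appears to require either a strictly stronger amortization or a refined cluster-boundary analysis, which we leave as the main open problem of the paper.
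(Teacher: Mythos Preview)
Your approach diverges substantially from the paper's, and as written it does not constitute a proof.

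First, the paper does \emph{not} use clustering or path-buying for this result. Its construction is: take a $d$-\emph{light} initialization (the $d$ lightest edges at every vertex, not a degree threshold), for each demand pair add the prefix and suffix of the shortest path consisting of the first and last $\ell$ missing edges, sample a random set $R$ with rate $1/(\ell d)$, and then add a $+4W$ \emph{subset} spanner on $R$ (Theorem~\ref{thm:subspan}) as a black box. The weighted neighborhood lemma (Theorem~\ref{thm:d_initialize}) guarantees $\Omega(\ell d)$ nodes adjacent to each prefix/suffix, so with constant probability some $r,r'\in R$ are $H$-adjacent to nodes $v,v'$ on the prefix/suffix; the error is then $2W$ (detour $v\to r$, $v'\to r'$) $+\,4W$ (subset spanner on $r\leadsto r'$) $+\,2W$ (triangle inequality back to $v,v'$), totaling $8W$. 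The edge count $O(nd + p\ell + n^{3/2}/\sqrt{\ell d})$ is balanced at $d=p^{1/4}$, $\ell=n/p^{3/4}$, and the all-pairs bound then follows from Lemma~\ref{lemma:saturated}.

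Second, your sketch has concrete gaps. Your initialization is the unweighted degree-threshold step, not the light initialization; the whole point of the paper is that the former fails for weighted graphs (see the counterexample in Figure~\ref{fig:cenbhd}), so you cannot invoke the weighted neighborhood lemma after it. More seriously, you assert that ``the amortized weighted-neighborhood lemma shows this weight is $O(W)$ per bridge,'' but Theorem~\ref{thm:d_initialize} says nothing about weights of subpaths or bridges---it only counts \emph{nodes} adjacent to a shortest path that is missing many edges. There is no statement in the paper bounding the weight of a shortest-path segment inside a cluster by $O(W)$, and indeed a shortest path can revisit a weighted cluster many times. Finally, the claim that a token argument ``essentially unchanged from the unweighted proofs'' gives an $O(np^{1/4})$ pairwise bound via path-buying is not substantiated: the unweighted $O(np^{1/4})$ pairwise result of Kavitha does not proceed by path-buying in this way, and the Baswana--Kavitha--Mehlhorn--Pettie path-buying argument is all-pairs only. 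Your explanation of the $+8W$ versus $+6W$ gap also does not match the actual source of the extra $2W$, which in the paper is simply that the available subset spanner has error $+4W$ rather than $+2W$.
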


\begin{theorem} \label{thm:intross}
For any $G = (V,E,w)$ and demand pairs $P = S \times S$, there is a $+4W$ pairwise spanner with $O(n|S|^{1/2})$ edges.
\end{theorem}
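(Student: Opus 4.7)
The plan is to mirror the unweighted $+2$ pairwise spanner for $P = S \times S$ of size $O(n|S|^{1/2})$ from prior work (in the spirit of Kavitha's pairwise spanner constructions), and to port it to the weighted setting along the same lines as Theorem~\ref{thm:intro2span}. As in the other theorems of the paper, the error slack doubles from $+2$ to $+4W$ because the weighted analog of the shortest-path neighborhood lemma forces us to pay up to $W$ for each of two possible detours through cluster centers.

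\textbf{Construction.} Set $d := \lceil |S|^{1/2}\rceil$. (a) For every vertex $v$, include in $H$ the $d$ lightest edges incident to $v$, contributing $O(nd) = O(n|S|^{1/2})$ edges. (b) Sample a cluster-center set $C \subseteq V$ uniformly at random so that $|C| = \tilde O(n/d) = \tilde O(n/|S|^{1/2})$, and w.h.p.\ every vertex with more than $d$ incident edges has some $c \in C$ among its $d$ lightest neighbors. (c) For each source $s \in S$, include a shortest-path-tree-like structure $T_s \subseteq G$ rooted at $s$ that reaches every cluster center along a shortest $s$-$c$ path in $G$, using $\tilde O(|C|)$ edges on top of (a). Summed over $s \in S$, layer (c) contributes $\tilde O(|S| \cdot n/|S|^{1/2}) = \tilde O(n|S|^{1/2})$ edges, so $|H| = \tilde O(n|S|^{1/2})$; polylogarithmic factors are absorbed by a slight decrease in $d$.

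\textbf{Error analysis.} Fix $(s,t) \in S \times S$ and a shortest $s$-$t$ path $\pi$ in $G$. If every edge of $\pi$ is among the $d$ lightest at one of its endpoints, then $\pi \subseteq H$ by (a). Otherwise, the amortized SPN lemma locates a ``problem'' vertex $v$ on $\pi$ whose next edge is not among $v$'s $d$ lightest; by (b), some $c \in C$ lies among $v$'s $d$ lightest neighbors, so the edge $cv$ is in $H$. We reroute $s \to c$ along the tree $T_s$ (which preserves $\dist_G(s,c) \le \dist_G(s,v) + W$), then take the light edge $c \to v$, then continue along the rest of $\pi$ toward $t$. A symmetric argument near $t$ introduces at most one more detour. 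Each detour costs at most $+2W$, for a total of $+4W$.

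\textbf{Main obstacle.} The hardest step is controlling the per-source tree $T_s$ at $\tilde O(|C|)$ fresh edges. A naive shortest-path tree rooted at $s$ uses up to $n-1$ edges, which is far too many. The rescue is that any tree edge $(u,\text{parent}(u))$ that is among $u$'s $d$ lightest incident edges is already in $H$ by (a); what remains to be added are only the ``heavy transitions'' along root-to-leaf paths of $T_s$, and the amortized SPN lemma bounds these by $\tilde O(|C|)$. This coupling of the base spanner (a) with the per-source trees (c) is precisely where the weighted setting forces $+4W$ rather than $+2W$: one $+2W$ charge is spent on the amortized SPN step inside (a) at the $t$ side, and a second $+2W$ charge is spent on the amortized SPN step inside the tree from (c) at the $s$ side. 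Making this interaction precise, and ensuring consistency of the cluster-center choice on both sides, is the key technical work.
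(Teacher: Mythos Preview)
Your approach departs substantially from the paper's, and its central step is not justified. The paper proves this theorem (Theorem~\ref{thm:subspan}) by a \emph{path-buying} argument in the style of~\cite{knudsen2014additive}: after a $d$-light initialization with $d=\sqrt{|S|}$, it greedily adds a full shortest path $\pi(s,t)$ for any still-unsatisfied pair $(s,t)\in S\times S$, and controls the total number of edges with a potential argument. The potential is the number of ``near-connected'' pairs in $S\times V$; by Theorem~\ref{thm:d_initialize}, adding a path with $\ell$ fresh edges raises this potential by $\Omega(d\ell)$, and since the potential is at most $|S|\cdot n$, the total number of fresh edges is $O(|S|n/d)=O(n\sqrt{|S|})$.

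Your construction instead wants to add, for each $s\in S$, a shortest-path tree $T_s$ from $s$ to every cluster center, contributing only $\tilde O(|C|)$ edges beyond the $d$-light initialization. You justify this by appeal to ``the amortized SPN lemma,'' but Theorem~\ref{thm:d_initialize} only controls missing edges on a \emph{single} shortest path: its contrapositive implies any one root-to-leaf path in $T_s$ has $O(n/d)$ missing edges, but it says nothing about the union of these paths across branches. The $d$-neighborhoods that witness distinct branches of $T_s$ may coincide (the ``at most three'' overlap bound in Lemma~\ref{lem:xdiff} crucially uses that all the missing edges lie on one common shortest path), so there is no reason a priori that the total number of missing tree edges is $O(n/d)$ rather than, say, $|C|\cdot O(n/d)$. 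Establishing a tree-analogue of Theorem~\ref{thm:d_initialize} would be a separate and nontrivial lemma; the paper does not supply one, and you have not either.

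As a side note, your error analysis is also off. The route ``$s\to c$ via $T_s$, then $c\to v$, then continue along the rest of $\pi$ toward $t$'' fails immediately, since the very next edge of $\pi$ after $v$ is missing from $H$ by your own choice of $v$. The correct one-detour analysis (assuming your trees existed as described) would be: walk the prefix of $\pi$ in $H$ up to the first missing edge at $v$, take the light edge $(v,c)$, and then use $T_t$ (not $T_s$) to reach $t$ exactly; this already yields $\dist_H(s,t)\le \dist_G(s,v)+2w(v,c)+\dist_G(v,t)\le \dist_G(s,t)+2W$. The fact that you arrive at $+4W$ via ``two detours'' suggests the rerouting was not actually worked out.
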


We summarize our main results (except Theorem \ref{thm:intross}) in Table~\ref{TABLE:results}, contrasted with known results for unweighted graphs. %Note that there exist some work that considers weighted graphs for near additive spanners~\cite{elkin2019almost,elkin2005computing,cohen2000polylog}, but to the best of our knowledge, no prior work has been done for pure additive spanners.

\begin{table}[t]
\centering
\setlength{\tabcolsep}{10pt}
\begin{tabular}{|c|l|c|l|}
\hline
\multicolumn{2}{|c|}{Unweighted} & \multicolumn{2}{c|}{Weighted} \\ \hline
Stretch & \multicolumn{1}{c|}{Size} & Stretch & \multicolumn{1}{c|}{Size} \\ \hline
$+2$ & $O(n^{3/2})$~\cite{Aingworth99fast} & $+2W$ & $O(n^{3/2})$ \cite{elkin2019almost} \\ \hline
$+4$ & $\Oish(n^{7/5})$~\cite{chechik2013new} & $+4W$ & $\Oish(n^{7/5})$ [\textbf{this paper}] \\ \hline
$+6$ & $O(n^{4/3})$~\cite{baswana2010additive, knudsen2014additive, woodruff2010additive} & $+6W$ & ? \\ \hline
$+c$ & $\Omega(n^{4/3 - \eps})$~\cite{abboud20174, HP18swat} & $+8W$ & $O(n^{4/3})$ [\textbf{this paper}] \\ \hline
\end{tabular}

\caption{Table of additive spanner constructions for unweighted and weighted graphs, where $W$ denotes the maximum edge weight. \label{TABLE:results}}
\end{table}

\subsection{Technical Overview: What's Harder With Weights?}

There is a key point of failure in the known constructions of unweighted additive spanners when one attempts the natural extension to weighted graphs.
To explain, let us give some technical background.
Nearly all spanner constructions start with a \emph{clustering} or \emph{initialization} step: taking the latter exposition \cite{knudsen2014additive}, a \emph{$d$-initialization} of a graph $G$ is a subgraph $H$ obtained by choosing $d$ arbitrary edges incident to each node, or all incident edges to a node of degree less than $d$.
After this, many additive spanner constructions leverage the following key fact (the one notable exception is the $+2$ all-pairs spanner, which is why one can recover the corresponding weighted version from prior work):
\begin{lemma} [\hspace{1sp}\cite{Cygan13, Kavitha15, Kavitha2017, chechik2013new}, etc.] \label{lem:pathnbhd}
Let $G$ be an undirected unweighted graph, let $\pi$ be a shortest path, and let $H$ be a $d$-initialization of $G$.
If $\pi$ is missing $\ell$ edges in $H$, then there are $\Omega(d\ell)$ different nodes adjacent to $\pi$ in $H$.
\end{lemma}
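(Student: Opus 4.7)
The plan is to prove this by a double-counting argument on the endpoints of the missing edges, using the shortest-path property of $\pi$ to bound overcounting.

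First I would observe that whenever an edge $(u,v)$ of $\pi$ is missing from $H$, both $u$ and $v$ must have $G$-degree at least $d$: if either endpoint had fewer than $d$ incident edges, the $d$-initialization would have retained all of them, and in particular $(u,v)$, contradicting the assumption. Consequently, each such endpoint has at least $d$ neighbors in $H$. Let $M \subseteq V(\pi)$ be the set of all vertices that are endpoints of missing edges; since the $\ell$ missing edges are all contained in the simple path $\pi$, we have $|M| \ge \ell + 1$. Summing the $H$-degrees over $M$ yields a total of at least $d\,|M| \ge d(\ell+1)$ incidences from $M$ into $V$.

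The second step is to bound how many times any single vertex $x$ can be counted in this sum, i.e.\ to bound the number of vertices of $\pi$ adjacent (in $H$) to $x$. Here is where I would use the shortest-path property of $\pi$ in an unweighted graph: if $x$ were adjacent to two path vertices $\pi[i]$ and $\pi[j]$ with $|i-j| \ge 3$, then the two-edge detour $\pi[i] \to x \to \pi[j]$ would strictly beat the sub-path of length $|i-j|$, contradicting the fact that $\pi$ is shortest. So $x$ is adjacent to at most three vertices of $\pi$ (namely $\pi[i-1], \pi[i], \pi[i+1]$ for some $i$), and a fortiori to at most three vertices of $M$.

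Combining these two steps, the number of distinct vertices adjacent to $M$ in $H$ is at least $\tfrac{1}{3} \sum_{v \in M} \deg_H(v) \ge \tfrac{d(\ell+1)}{3} = \Omega(d\ell)$, which gives the claimed bound. The only nontrivial ingredient is the ``bounded branching'' bound of three on how many path vertices one external vertex can touch, and this is exactly the property that relies on edges of unit weight: replacing one edge of $\pi$ by an arbitrarily heavy one would let $x$ neighbor many vertices of $\pi$ without creating a shortcut, which is precisely why Lemma~\ref{lem:pathnbhd} cannot be ported verbatim to weighted graphs and why the paper must develop an amortized replacement.
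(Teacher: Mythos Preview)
Your proof is correct and follows essentially the same approach as the paper: show that each endpoint of a missing edge has $H$-degree at least $d$, then invoke the unweighted shortest-path triangle inequality to cap the overcount at three per external vertex. The only cosmetic difference is that the paper sums over the $\ell$ missing edges while you sum over the $\ge \ell+1$ endpoint vertices in $M$.
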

\begin{proof}
For each missing edge $(u, v) \in \pi$, by construction both $u$ and $v$ have degree at least $d$ in $H$ (otherwise, $\text{deg}_H(u) < d$, in which edge $(u,v)$ is added in the $d$-initialization $H$).
By the triangle inequality, any given node is adjacent to at most three nodes in $\pi$.
Hence, adding together the $\ge d$ neighbors of each of the $\ell$ missing edges, we count each node at most three times so the number of nodes adjacent to $\pi$ is still $\Omega(d\ell)$.
\qed
\end{proof}

The difficulty of the weighted setting is largely captured by the fact that Lemma~\ref{lem:pathnbhd} fails when $G$ is edge-weighted.
%For a counterexample, consider $d$ nodes with incident edges of large weight $W > \eps \ell$, going to every node on a shortest path made of $\ell$ edges of small weight $\eps$ (see Figure \ref{fig:cenbhd}).
%After initialization, it is possible that $H$ is missing exactly the $\ell$ edges on the shortest path, but there are still only $d$ nodes adjacent to $\pi$.
As a counterexample, let $\pi$ be a shortest path consisting of $\ell+1$ nodes and $\ell$ edges of weight $\eps$. Additionally, consider $d$ nodes, each connected to every node along $\pi$ with an edge of weight $W > \eps \ell$. A candidate $d$-initialization $H$ consists of selecting every edge of weight $W$. In this case, all $\ell$ edges in $\pi$ are missing in $H$, but there are still only $d \neq \Omega(d\ell)$ nodes adjacent to $\pi$ in $H$.

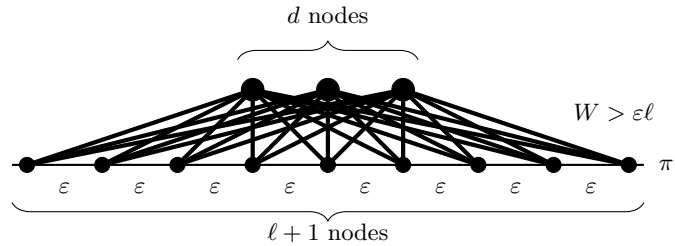
\begin{figure}
    \centering
    \begin{tikzpicture}
\foreach \x in {0,...,2}{
    \draw [fill=black] (\x, 1) circle [radius=0.15];
    \foreach \y in {-3,...,5}{
        \draw [ultra thick] (\x, 1) -- (\y, 0);
    }
}

\draw [decorate,decoration={brace,amplitude=7}] (-0.2, 1.4) -- (2.2, 1.4);
\node at (1, 2) {$d$ nodes};

\draw [decorate,decoration={brace,amplitude=7}] (5.2, -0.5) -- (-3.2, -0.5);
\node at (1, -0.9) {$\ell+1$ nodes};

\node at (4.8, 0.7) {\bf $W > \eps \ell$};

\foreach \x in {-3,...,5}{
    \draw [fill=black] (\x, 0) circle [radius=0.1];
}
\foreach \x in {-3,...,4}{
    \node at ({\x+0.5}, -0.3) {$\eps$};
}

\draw [thick] (-3.2, 0) -- (5.2, 0);
\node at (5.5, 0) {$\pi$};
    \end{tikzpicture}
    \caption{A counterexample to Lemma \ref{lem:pathnbhd} for weighted graphs.}
    \label{fig:cenbhd}
\end{figure}

The fix, as it turns out, is simple in construction but involved in proof.
We simply replace initialization with \emph{light initialization}, where one must specifically add the lightest $d$ edges incident to each node.
With this, the proof of Lemma \ref{lem:pathnbhd} is still not trivial: it remains possible that an external node can be adjacent to arbitrarily many nodes along $\pi$, so a direct counting argument fails.
However, we show that such occurrences can essentially be amortized against the rising and falling pattern of missing edge weights along $\pi$.
This leads to a proof that \emph{on average} an external node is adjacent to $O(1)$ nodes in $\pi$, which is good enough to push the proof through.
We consider this weighted extension of Lemma \ref{lem:pathnbhd} to be the main technical contribution of this work, and we are hopeful that it may be of independent interest as a structural fact about shortest paths in weighted graphs.

\section{Neighborhoods of Weighted Shortest Paths}

Here we introduce the extension of Lemma \ref{lem:pathnbhd}.
Following the technique in \cite{knudsen2014additive}, define a \emph{$d$-light initialization} of a weighted graph $G = (V, E, w)$ to be a subgraph $H$ obtained by including the $d$ lightest edges incident to each node (or all edges incident to a node of degree less than $d$). Ties between edges of equal weight are broken arbitrarily; for clarity we assume this occurs in the background so that we can unambiguously refer to ``the lightest $d$ edges'' incident to a node. We prove the weighted analogue of Lemma~\ref{lem:pathnbhd}.

\begin{theorem}
\label{thm:d_initialize}
If $H$ is a $d$-light initialization %\todo{$d$-light initialization? -RCS}
of an undirected weighted graph $G$, and
there is a shortest path $\pi$ in $G$ that is missing $\ell$ edges in $H$, then there are $\Omega(d\ell)$ nodes adjacent to $\pi$ in $H$.
\end{theorem}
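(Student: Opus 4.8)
The plan is to run the unweighted argument --- lower-bound the number of nodes incident to $\pi$ by counting $H$-edges touching $\pi$ --- but replace the triangle-inequality bookkeeping (``each node is adjacent to at most three nodes of a shortest path'') by a bound that uses the rising-and-falling profile of the missing-edge weights. \textbf{Setup.} List the missing edges as $e_1,\dots,e_\ell$ in the order they occur along $\pi$, put $w_j:=w(e_j)$, and let $U$ be the set of all endpoints of missing edges (these are nodes of $\pi$). For $v\in U$ let $\omega(v)$ be the weight of the lightest missing edge incident to $v$. Since such an edge was rejected by the $d$-light initialization at $v$, the $d$ lightest edges at $v$ all lie in $H$ and all have weight $\le\omega(v)$; in particular $\deg_G(v)\ge d+1$, so $v$ has a set $N(v)$ of exactly $d$ neighbors in $H$, each reached by an edge of weight $\le\omega(v)$. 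Every node of $\bigcup_{v\in U}N(v)$ is adjacent to $\pi$ in $H$, and $\sum_{v\in U}|N(v)| = d\,|U|\ge d\ell$. So it suffices to show that this sum overcounts each node only $O(1)$ times on average, i.e.\ that $\bigl|\bigcup_{v\in U}N(v)\bigr|=\Omega(d\ell)$.

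\textbf{The geometric input.} Suppose a node $x$ lies in $N(v)\cap N(v')$ with $v$ preceding $v'$ along $\pi$. Then
\[ \dist_G(v,v')\;\le\;w(x,v)+w(x,v')\;\le\;\omega(v)+\omega(v'). \]
Because $\pi$ is shortest, $\dist_G(v,v')$ is the weight of the $v$--$v'$ sub-path of $\pi$, which contains the lightest missing edge at $v$ or the one at $v'$ (at least one of them lies on it, unless both point ``away'' from the interval) together with all missing edges strictly between $v$ and $v'$. Hence the missing edges strictly between $v$ and $v'$ are collectively light: of total weight at most $\max(\omega(v),\omega(v'))$ in the favorable cases, and at most $\omega(v)+\omega(v')$ always. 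This is the weighted substitute for the bounded-multiplicity fact used in the unweighted proof.

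\textbf{Amortization against the profile.} Fix a node $x$ and let $v_{p_1},\dots,v_{p_t}$ be the $U$-nodes with $x\in N(v_{p_i})$, listed in path order. If the values $\omega(v_{p_1}),\dots,\omega(v_{p_t})$ happened to be monotone, the displayed bound applied to a suitably chosen pair would force $t=O(1)$: when the sequence increases, the lightest missing edge at $v_{p_1}$ sits on the sub-path and cancels against the right-hand side, leaving no room for more than a couple of the (heavier) intermediate missing edges. In general the sequence $w_1,\dots,w_\ell$ only rises and falls, so I would cut it at its strict local extrema into maximal monotone runs, establish the $O(1)$ bound inside each run (attaching $x$ through the forward or the backward light missing edge, whichever produces the cancellation for that run's direction), and then check that conflicts between $N(\cdot)$-sets coming from two different runs can only involve the $O(1)$ missing edges adjacent to the shared peak or valley. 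Summing over all runs, every node lies in $O(1)$ of the sets $N(v)$; equivalently one extracts $\Omega(\ell)$ missing edges whose cheap-neighborhoods $N(\cdot)$ are pairwise disjoint. Either way $\bigl|\bigcup_{v\in U}N(v)\bigr|\ge\Omega(d\ell)$, which completes the proof. (When $d$ is below a fixed constant the statement is immediate, since already $\ge\ell$ endpoints of missing edges are adjacent to $\pi$.)

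\textbf{The hard part.} The entire difficulty is the amortization step. As the counterexample to Lemma~\ref{lem:pathnbhd} shows, against an arbitrary $d$-initialization an outside node really can be adjacent to arbitrarily many nodes of a weighted shortest path, so it is essential that the initialization is \emph{light}; and even with lightness, reconciling long monotone runs of the weight profile with stretches of rapidly oscillating weight scales is the delicate point, and is exactly where the rising-and-falling structure of the missing-edge weights has to be exploited. I expect this bookkeeping to be the main obstacle.
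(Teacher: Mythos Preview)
Your setup and ``geometric input'' are correct and coincide with the paper's Lemma~\ref{lem:weightineq}: restricting to left endpoints $u_i$ of the missing edges and writing $N^*$ for your $N$, it reads $w_k \ge \sum_{i'=i+1}^{k-1} w_{i'}$ whenever $x \in N^*(u_i) \cap N^*(u_k)$. The gap is in the amortization, which you rightly flag as the obstacle but do not complete, and the monotone-run plan as stated does not work. The eventual claim ``every node lies in $O(1)$ of the sets $N(v)$'' is false pointwise: take $w_{2j-1}=1$ and $w_{2j}=10^{j}$. The geometric inequality is consistent with a single $x$ lying in $N^*(u_{2j})$ for \emph{every} $j$, since each instance $w_{2b}\ge\sum_{i'=2a+1}^{2b-1}w_{i'}$ holds (the right side is at most $2w_{2b-2}=w_{2b}/5$). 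These even indices span $\Theta(\ell)$ monotone runs of length two, and $u_2,u_{2j}$ are nowhere near a common peak or valley, so your inter-run claim (``conflicts only involve the $O(1)$ missing edges adjacent to the shared peak or valley'') also fails. The weaker ``on average $O(1)$'' target is the right one, but monotone runs do not reach it without a further idea.

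The paper's amortization bypasses runs entirely. Call $e_k$ \emph{pre-light} if $w_{k-1}\ge w_k$ (with $e_1$ pre-light by convention), and define post-light symmetrically. A one-line pigeonhole (Lemma~\ref{lem:preheavy}) shows more than $\ell/2$ edges are pre-light or more than $\ell/2$ are post-light; WLOG the former. Now restrict to pre-light edges only: if four pre-light indices $i<a<b<k$ all have $x$ in their $N^*$, then $k\ge i+3$, so Lemma~\ref{lem:weightineq} gives $w_k\ge w_{i+1}+w_{k-1}$, and pre-lightness of $e_k$ gives $w_{k-1}\ge w_k$, whence $w_{i+1}\le 0$, a contradiction. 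Thus each $x$ lies in at most three of the pre-light $N^*(u_i)$, and the union over the $>\ell/2$ pre-light edges has size exceeding $d\ell/6$. The pre-light condition delivers exactly the cancelling inequality $w_{k-1}\ge w_k$ that you wanted monotonicity to deliver, but it does so globally with no run bookkeeping; in the bad example above every offending even index is pre-\emph{heavy} and is simply discarded.
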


We give some definitions and notation which will be useful in the proof of Theorem~\ref{thm:d_initialize}.
Let $s$ and $t$ be the endpoints of a shortest path $\pi$, and let
$M := \pi \setminus E(H)$
be the set of edges in $\pi$ currently missing in $H$ so that $|M| = \ell$.
For convenience we consider these edges to be \textit{oriented} from $s$ to $t$, so we write $(u, v) \in M$ to mean that $\dist_G(s, u) < \dist_G(s, v)$ and $\dist_G(u, t) > \dist_G(v, t)$. Suppose the edges in $M$ are labeled in order $e_1$, $e_2$, \ldots, $e_{\ell}$ where $e_i = (u_i,v_i)$, and let $w_i$ denote the weight of edge $e_i$. Given $u \in V$, let $N^*(u)$ denote the \emph{$d$-neighborhood} of $u$ as follows:
$$N^*(u) := \left\{v \in V \ \mid \ (u, v) \text{ is one of the lightest $d$ edges incident to $u$} \right\}.$$
We will show that the size of the union of the $d$-neighborhoods of the nodes $u_1$, \ldots, $u_\ell$ is $\Omega(d\ell)$, that is
$$\left| \bigcup \limits_{(u, v) \in M} N^*(u) \right| = \Omega(d\ell)$$
noting that the above set is a subset of all nodes adjacent to $\pi$. %In particular, the above set may not contain nodes $v'$ connected to $u \in \pi$ by an edge that is among the $d$ lightest incident to $v'$, \emph{but not} among the $d$ lightest incident to $u$.
In particular, the above set may not contain nodes $v'$ connected to $u \in \pi$ by an edge that is 1) among the $d$ lightest incident to $v'$, 2) \emph{not} among the $d$ lightest incident to $u$.
However, the above set necessarily contains all nodes $v'$ which are connected to some $u_i$ or $v_i$ by an edge among the $d$ lightest incident to $u_i$ or $v_i$. We remark that if the $d$-neighborhoods $N^*(u_1)$, $N^*(u_2)$, \ldots, $N^*(u_{\ell})$ are pairwise disjoint, then $|\bigcup_{(u,v)\in M} N^*(u)| = d\ell$, which immediately implies there are at least $d\ell$ nodes adjacent to $\pi$ in $H$. Hence for the remainder of the proof, we assume there exist $i$ and $k$ with $1\le i<k \le \ell$ such that $N^*(u_i) \cap N^*(u_k)$ is nonempty. We use the convention that if $a$ and $b$ are integers with $b<a$, then $\sum_{i=a}^b f(i) = 0$. The following lemma holds (see Figure~\ref{fig:weightineq}):

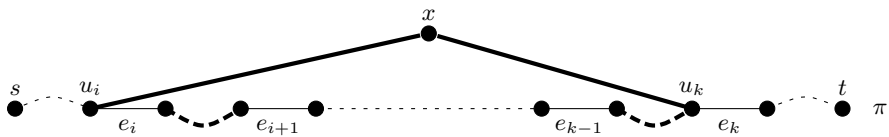
\begin{figure}[h]
\centering
\begin{tikzpicture}
	\node (s) at (-4,0) {};
	\node (t) at (7,0) {};
	\node (x) at (1.5,1) {};
	\draw [fill=black] (x) circle [radius=0.1];
	\draw [fill=black] (s) circle [radius=0.1];
	\draw [fill=black] (t) circle [radius=0.1];
	\foreach \x in {-3,-2,-1,0,3,4,5,6}{
		\draw [fill=black] (\x, 0) circle [radius=0.1];
	}
	\draw [ultra thick] (-3,0) -- (x);
	\draw [ultra thick] (x) -- (5,0);
	\foreach \x in {-3,-1,3,5}{
		\draw (\x,0) -- (\x+1,0);
	}
	\draw [dash pattern={on 1.5pt off 3pt}] (0,0) -- (3,0);
	\draw [ultra thick, dash pattern={on 4pt off 2pt}] (-2,0) .. controls (-1.5,-0.3) .. (-1,0);
	\draw [ultra thick, dash pattern={on 4pt off 2pt}] (4,0) .. controls (4.5,-0.3) .. (5,0);
	\node[above=2pt] at (x) {$x$};
	\node[above=2pt] at (-3,0) {$u_i$};
	\node[above=2pt] at (5,0) {$u_k$};
	\node[below=1pt] at (-2.5,0) {$e_i$};
	\node[below=1pt] at (-0.5,0) {$e_{i+1}$};
	\node[below=1pt] at (3.5,0) {$e_{k-1}$};
	\node[below=1pt] at (5.5,0) {$e_k$};
	\node[above=2pt] at (s) {$s$};
	\node[above=2pt] at (t) {$t$};
	\node at (7.5,0) {$\pi$};
	\draw [dash pattern={on 1.5pt off 3pt}] (s) .. controls (-3.5,0.2) .. (-3,0);
	\draw [dash pattern={on 1.5pt off 3pt}] (6,0) .. controls (6.5,0.2) .. (t);

\end{tikzpicture}
\caption{Illustration of Lemma~\ref{lem:weightineq}. The bold dashed curves represent subpaths in $H$.}
\label{fig:weightineq}
\end{figure}

\begin{lemma}\label{lem:weightineq}Let $\pi$ be a shortest path, let $x\in V$ be a node such that $x \in N^*(u_i) \cap N^*(u_k)$ for some $1\le i<k \le \ell$, and consider the edges $e_i, \ldots, e_k \in M$ with weights $w_i, \ldots, w_k$. Then
%$e_1=(u_1, v_1), \dots,  e_k=(u_k, v_k)$
%e a contiguous sequence of edges in $M$, let $w_i$ be the weight of each $e_i$, and let $x \in N^*(u_1) \cap N^*(u_k)$. \todo{I think we need the assumption that such a node $x$ exists - If so, then the lemma is true, but I hope this doesn't affect future theorems -RCS}
$$w_k \ge \sum \limits_{i'=i+1}^{k-1} w_{i'}.$$
\end{lemma}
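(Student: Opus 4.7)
The plan is to combine two ingredients: a weight bound coming from the definition of $N^*$, together with the shortest-path property of $\pi$ applied to the ``shortcut'' through $x$.

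First I would observe that, because the edge $e_i=(u_i,v_i)$ is missing from $H$, it is not among the $d$ lightest edges incident to $u_i$ (otherwise a $d$-light initialization would have included it). Since $x \in N^*(u_i)$, the edge $(u_i,x)$ is among the $d$ lightest incident to $u_i$, and therefore $w(u_i,x) \le w_i$. The analogous argument with $e_k$ at the other endpoint gives $w(u_k,x) \le w_k$. This converts the assumption ``$x$ lies in both $d$-neighborhoods'' into a quantitative comparison between the $u_i$--$x$ and $u_k$--$x$ edge weights and the missing-edge weights $w_i,w_k$.

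Next I would use $x$ as a detour: the two-edge path $u_i \to x \to u_k$ certifies
\[
\dist_G(u_i,u_k) \;\le\; w(u_i,x)+w(x,u_k) \;\le\; w_i+w_k.
\]
On the other side, since $\pi$ is a shortest path, $\dist_G(u_i,u_k)$ equals the length of the sub-path of $\pi$ from $u_i$ to $u_k$. This sub-path traverses the missing edges $e_i,e_{i+1},\ldots,e_{k-1}$ in order, possibly interleaved with non-missing edges of $\pi$ (all of positive weight), so its length is at least $\sum_{i'=i}^{k-1} w_{i'}$. Combining the two bounds yields
\[
w_i + \sum_{i'=i+1}^{k-1} w_{i'} \;\le\; \dist_G(u_i,u_k) \;\le\; w_i + w_k,
\]
and canceling $w_i$ from both sides gives exactly $w_k \ge \sum_{i'=i+1}^{k-1} w_{i'}$.

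I do not expect a significant obstacle here: the only subtle point is justifying $w(u_i,x)\le w_i$ from the fact that $e_i$ is missing and $x\in N^*(u_i)$, which relies on the precise definition of $d$-light initialization (as opposed to an arbitrary $d$-initialization, where this inequality would fail, as illustrated in Figure~\ref{fig:cenbhd}). Once this step is in place, the rest is a single application of the triangle inequality and the shortest-path property of $\pi$.
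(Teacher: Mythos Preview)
Your proof is correct and follows essentially the same route as the paper: you use $x\in N^*(u_i)\cap N^*(u_k)$ together with the fact that $e_i,e_k\notin E(H)$ to bound $w(u_i,x)\le w_i$ and $w(u_k,x)\le w_k$, then compare the two-edge detour through $x$ with the subpath $\pi[u_i\leadsto u_k]$ (whose length is at least $\sum_{i'=i}^{k-1} w_{i'}$) and cancel $w_i$. This is exactly the paper's argument.
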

\begin{proof}
Consider the subpath of $\pi$ from $u_i$ to $u_k$, denoted $\pi[u_i \leadsto u_k]$. We have
\begin{align*}
\sum \limits_{i'=i}^{k-1} w_{i'} &\le \text{length}\left(\pi[u_i \leadsto u_k]\right)\\
&\le w(u_i, x) + w(x, u_k) \tag*{($\pi[u_i \leadsto u_k]$ is a shortest path)}\\
&\le w_i + w_k
\end{align*}
where the last inequality follows from the fact that edges $(u_i, x), (x, u_k)$ are among the $d$ lightest edges incident to $u_i$ and $u_k$ respectively (since $x \in N^*(u_i) \cap N^*(u_k)$), but $e_i$ and $e_k$ are not, since they are omitted from $H$.
Lemma~\ref{lem:weightineq} follows by subtracting $w_i$ from both sides of the above inequality.
\qed
\end{proof}

For the next part, for edge $e \in M$, say that $e$ is \emph{pre-heavy} if its weight is strictly greater than the preceding edge in $M$, and/or \emph{post-heavy} if its weight is strictly greater than the following edge in $M$. For notational convenience, if an edge is not pre-heavy, we say the edge is pre-light. Similarly, if an edge is not post-heavy, we say the edge is post-light. By convention, the first edge $e_1 \in M$ is pre-light and the last edge $e_{\ell} \in M$ is post-light. We state the following simple lemma; recall that $|M|=\ell$.

\begin{lemma}\label{lem:preheavy}
Either more than $\dfrac{\ell}{2}$ edges in $M$ are pre-light, or more than $\dfrac{\ell}{2}$ edges in $M$ are post-light.
\end{lemma}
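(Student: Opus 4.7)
The plan is to compare each pair of consecutive missing edges $(e_i, e_{i+1})$ for $i \in \{1, \ldots, \ell-1\}$ and observe that each such pair contributes directly to at least one of the two counts we care about.

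First I would set up the bookkeeping. For each $i \in \{1, \ldots, \ell-1\}$, exactly one of the three cases holds:
$$w_i < w_{i+1}, \qquad w_i > w_{i+1}, \qquad w_i = w_{i+1}.$$
In the first case $e_i$ is post-light while $e_{i+1}$ is pre-heavy; in the second case $e_i$ is post-heavy while $e_{i+1}$ is pre-light; in the third case $e_i$ is post-light and $e_{i+1}$ is pre-light. Let $a$ be the number of indices with $w_i < w_{i+1}$ and $b$ be the number with $w_i > w_{i+1}$, so $a + b \le \ell - 1$.

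Next I would count. Every pre-heavy edge $e_j$ has $j \ge 2$ (by the convention that $e_1$ is pre-light) and satisfies $w_j > w_{j-1}$, so the number of pre-heavy edges equals $a$, giving exactly $\ell - a$ pre-light edges. Symmetrically, the number of post-heavy edges equals $b$, giving $\ell - b$ post-light edges.

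Finally, suppose for contradiction that neither count strictly exceeds $\ell/2$, i.e.\ $\ell - a \le \ell/2$ and $\ell - b \le \ell/2$. Then $a \ge \ell/2$ and $b \ge \ell/2$, so $a + b \ge \ell$, contradicting $a + b \le \ell - 1$. Hence at least one of the two counts is strictly greater than $\ell/2$. The only ``obstacle'' is really just careful bookkeeping of the endpoint conventions for $e_1$ and $e_\ell$ and of the ``equal weight'' case, which is what produces the strict inequality from the slack of $1$ in $a + b \le \ell - 1$.
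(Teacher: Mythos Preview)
Your proof is correct and follows essentially the same approach as the paper: both arguments examine each of the $\ell-1$ consecutive pairs $(e_i,e_{i+1})$ and use that at most one of ``$e_{i+1}$ is pre-heavy'' and ``$e_i$ is post-heavy'' can hold. The only cosmetic difference is that the paper counts pre-light and post-light edges directly to get $|S_1|+|S_2|\ge \ell+1$ and then applies pigeonhole, whereas you count the complementary heavy edges to get $a+b\le \ell-1$ and finish by contradiction; these are the same inequality.
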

\begin{proof}
Let $S_1$ be the set of edges in $M$ which are pre-light, and let $S_2$ be the set of edges in $M$ which are post-light. Note that $e_1 \in S_1$ and $e_{\ell} \in S_2$. For each of the $\ell-1$ pairs of consecutive edges $(e_i, e_{i+1})$ in $M$ where $i = 1, \ldots, \ell-1$, it is immediate by definition that either $e_i \in S_2$ or $e_{i+1} \in S_1$ (or both if $w_i=w_{i+1}$). These statements imply $|S_1|+|S_2| \ge \ell+1$, so at least one of $S_1$ or $S_2$ has cardinality at least $\frac{\ell+1}{2} > \frac{\ell}{2}$.%using something other than $e_1$, $e_2$ since that was used to refer to a sequence of edges *within* $M$
%Since additionally the first edge is not pre-heavy and the last edge is not post-heavy, we have
%$$\left| \{e \in M \ \mid \ e \text{ is not pre-heavy}\}\right| + \left| \{e \in M \ \mid \ e \text{ is not %post-heavy}\}\right| \ge \ell+1$$
%and so one of the two set cardinalities in this sum is $>\ell/2$.
\qed
\end{proof}

In the sequel, we assume without loss of generality that more than $\frac{\ell}{2}$ edges in $M$ are pre-light; the other case is symmetric by exchanging the endpoints $s$ and $t$ of $\pi$.
We can now say the point of the previous two lemmas: together, they imply that \emph{most} edges $(u, v) \in M$ have mostly non-overlapping $d$-neighborhoods $N^*(u)$. That is:
%\begin{lemma} \label{lem:xdiff}
%For any $x \in V$, there are at most 2 edges $(u, v) \in M$ that are not pre-heavy with $x \in N^*(u)$.
%\end{lemma}\todo{this lemma is with respect to some shortest path $\pi$ and some subgraph/initialization $H$; this should probably be clarified -RCS}
\begin{lemma}\label{lem:xdiff}
Let $\pi$ be a shortest path. For any node $x \in V$, there exist at most three nodes $u$ along $\pi$ such that $x \in N^*(u)$ and edge $(u,v) \in M$ is pre-light.
\end{lemma}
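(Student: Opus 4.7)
My plan is to argue by contradiction. Suppose four distinct indices $i_1 < i_2 < i_3 < i_4$ in $\{1, \ldots, \ell\}$ satisfy $x \in N^*(u_{i_s})$ and $e_{i_s}$ pre-light for every $s \in \{1,2,3,4\}$. I will reach a contradiction using only two of these hypotheses: that $x$ lies in $N^*(u_{i_1}) \cap N^*(u_{i_4})$, and that $e_{i_4}$ is pre-light. The rest of the data serves merely to guarantee the crucial index gap $i_4 \ge i_1 + 3$, which is what powers the argument.

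From $x \in N^*(u_{i_1}) \cap N^*(u_{i_4})$, Lemma~\ref{lem:weightineq} applied to the pair $(i_1, i_4)$ yields
$$w_{i_4} \;\ge\; \sum_{i' = i_1 + 1}^{i_4 - 1} w_{i'}.$$
Because $i_4 - i_1 - 1 \ge 2$, the range of summation contains at least two indices, one of which is $i_4 - 1$ and at least one other (e.g.\ $i_1 + 1$). Since all edge weights lie in $(0, W]$ and are therefore strictly positive, the right-hand side strictly exceeds $w_{i_4 - 1}$. Combined with the pre-light hypothesis at $e_{i_4}$, which says $w_{i_4} \le w_{i_4 - 1}$, this gives $w_{i_4 - 1} \ge w_{i_4} > w_{i_4 - 1}$, the desired contradiction.

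The main (and essentially only) subtlety is recognizing why the bound is \emph{three} rather than two: with only three overlap indices in the consecutive configuration $i_3 = i_1+2,\, i_2 = i_1+1$, the summation in Lemma~\ref{lem:weightineq} collapses to a single term $w_{i_2}$, so the pre-light inequality at $e_{i_3}$ can be satisfied with equality, and no contradiction arises. The fourth index is what forces at least one strictly positive ``spare'' term to appear in the sum, which is precisely what defeats the pre-light bound. No boundary-case attention is needed, because pre-lightness is invoked only at the largest index $i_4 \ge 4$, so the preceding edge $e_{i_4 - 1}$ in $M$ unambiguously exists, and the convention that $e_1$ is pre-light plays no role.
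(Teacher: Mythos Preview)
Your proof is correct and follows essentially the same approach as the paper: contradict by taking the extreme indices among four assumed overlap points, apply Lemma~\ref{lem:weightineq} to get $w_{i_4} \ge \sum_{i'=i_1+1}^{i_4-1} w_{i'}$, and combine with the pre-light condition $w_{i_4} \le w_{i_4-1}$ to force a positive weight to vanish. The paper phrases the contradiction as $w_{i_1+1}=0$ whereas you phrase it as the strict inequality $w_{i_4} > w_{i_4-1}$, but these are two rearrangements of the same one-line argument.
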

\begin{proof}
Suppose for sake of contradiction there exist four nodes $u_i$, $u_a$, $u_b$, $u_k$ with $1 \le i < a < b < k \le \ell$ such that $x$ belongs to the $d$-neighborhoods of $u_i$, $u_a$, $u_b$, and $u_k$, and the edges $(u_i,v_i)$, $(u_a,v_a)$, $(u_b,v_b)$, and $(u_k,v_k)$ are pre-light. In particular, we have $k \ge i+3$ and $x \in N^*(u_i) \cap N^*(u_k)$. By Lemma~\ref{lem:weightineq} and the above observation, we have
\[ w_k \ge \sum_{i'=i+1}^{k-1} w_i' = w_{i+1} + \ldots + w_{k-1} \ge w_{i+1} + w_{k-1} \]
By assumption, $e_k = (u_k,v_k)$ is pre-light, so $w_{k-1} \ge w_k$, and the above inequality implies $w_k \ge w_{i+1}+w_{k-1} \ge w_{i+1} + w_k$, or $w_{i+1}=0$. Since edge weights are strictly positive, we have contradiction, proving Lemma~\ref{lem:xdiff}.
\qed
\end{proof}
\begin{comment}
\begin{proof}
Let $e_1, e_k$ be the first, last such edges in $M$, with $e_2, \dots, e_{k-1} \in M$ in between, and $w_i$ the edge weight of each $e_i$.
By Lemma \ref{lem:weightineq} we have 
$$w_k \ge \sum \limits_{i=2}^{k-1} w_i.$$
Assuming $w_k$ is pre-light, we have $w_k \le w_{k-1}$ and so
$$0 \ge \sum \limits_{i=2}^{k-2} w_i.$$
Since edge weights are positive, this is possible only if $k \le 2$.
\qed
\end{proof}
\end{comment}

Finally, define set $X^*$ as follows:
\[X^* := \bigcup \limits_{\substack{(u, v) \in M \\ \text{ is pre-light}}} N^*(u).\]

By Lemma~\ref{lem:preheavy} and the above pre-heavy assumption, there are more than $\frac{\ell}{2}$ pre-light edges $(u,v)$, so the multiset containing all $d$-neighborhoods $N^*(u)$ contains more than $\frac{d\ell}{2}$ nodes. By Lemma~\ref{lem:xdiff}, any given node is contained in at most three of these $d$-neighborhoods, implying $|X^*| > \dfrac{d\ell}{6}$. Since $X^*$ is a subset of $|\bigcup_{(u,v)\in M} N^*(u)|$, we conclude that there are $\Omega(d\ell)$ nodes adjacent to $\pi$ in $H$.
proving Theorem \ref{thm:d_initialize}.

\section{Spanner Constructions}

We show how Theorem~\ref{thm:d_initialize} can be used to construct additive spanners on edge-weighted graphs.
These constructions are not significant departures from prior work; the main difference is applying Theorem~\ref{thm:d_initialize} in the right place.

\subsection{Subset and Pairwise Spanners}

\begin{definition} [Pairwise/Subset Additive Spanners]
Given a graph $G=(V,E,w)$ and a set of demand pairs $P \subseteq V \times V$, a subgraph $H = (V, E_H \subseteq E, w)$ is a \emph{$+c$ pairwise spanner} of $G, P$ if
$$\dist_H(s, t) \le \dist_G(s, t) + c \text{ for all }(s, t) \in P.$$
When $P = S \times S$ for some $S \subseteq V$, we say that $H$ is a \emph{$+c$ subset spanner} of $G, S$.
\end{definition}

In the following results, all graphs $G$ are undirected and connected with (not necessarily integer) edge weights in the interval $(0, W]$, where $W$ is the maximum edge weight. Let $|V|=n$, let $p=|P|$ denote the number of demand pairs (for pairwise spanners), and let $\sigma=|S|$ denote the number of sources (for subset spanners).

\begin{theorem} \label{thm:subspan}
Any $n$-node graph $G=(V,E,w)$ with source nodes $S \subseteq V$ has a $+4W$ subset spanner with $O(n\sigma^{1/2})$ edges.
\end{theorem}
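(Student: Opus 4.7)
The construction is the weighted analog of the $+2$ subset spanner of Cygan, Grandoni, and Kavitha~\cite{Cygan13}. The plan is to replace their $d$-initialization by the $d$-light initialization of Section~2 and to invoke Theorem~\ref{thm:d_initialize} wherever their analysis uses Lemma~\ref{lem:pathnbhd}. The additive error inflates from $+2$ (unweighted) to $+4W$ (weighted) for the same kind of technical reason encountered in the $+6 \to +8W$ case: a shortcut through a shared light neighbor costs up to $2W$ per step in the weighted setting, and the analysis requires up to two such shortcuts per corrected gap.

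First, I set $d := \lceil \sigma^{1/2} \rceil$ and include in $H$ a $d$-light initialization of $G$, which contributes $O(nd) = O(n\sigma^{1/2})$ edges. I then apply a greedy path-buying procedure in the style of~\cite{Cygan13}: process demand pairs $(s,t) \in S \times S$ in arbitrary order, and for each pair where the current spanner satisfies $\dist_H(s,t) > \dist_G(s,t) + 4W$, add the shortest $s$-to-$t$ path of $G$ in its entirety to $H$. By construction, the resulting subgraph is a $+4W$ subset spanner of $G, S$, so all of the work is in proving the size bound.

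For the size bound, the idea is as follows. For each path $\pi$ added during the greedy phase, let $\ell$ denote the number of edges of $\pi$ missing from the spanner at that moment; by Theorem~\ref{thm:d_initialize}, $\pi$ has $\Omega(d\ell)$ neighbors in the current $H$. I then charge each added edge to $\Theta(d)$ fresh (source, neighbor) pairs drawn from this neighbor set, and argue that every such pair $(s, x)$ can be charged at most $O(1)$ times over the execution: if the same pair were charged twice, the second occurrence would have produced a valid $+4W$ detour through two light edges incident to $x$ (each of weight at most $W$), so the greedy rule would not have triggered a path-addition. Since there are $O(n\sigma)$ such pairs in total, the number of greedy edges is at most $O(n\sigma/d) = O(n\sigma^{1/2})$.

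\textbf{Main obstacle.} The delicate step is making the charging rigorous given that Theorem~\ref{thm:d_initialize} supplies only an \emph{amortized} neighborhood count: in contrast to the unweighted setting, a single external vertex $x$ may lie in $N^*(u)$ for many different $u$ along $\pi$. I will need to argue that such sharing cannot defeat the charging, essentially because any repeatedly shared $x$ already supplies a $+4W$-valid detour through two light edges of weight at most $W$ each, which is exactly what the extra error budget beyond $+2W$ is designed to purchase. Verifying that this ``shared-neighbor shortcut'' always fits within the $+4W$ bound, against arbitrary source orderings and missing-edge patterns, is the step I expect to take the most care.
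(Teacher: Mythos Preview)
Your construction and analysis plan match the paper's proof essentially exactly: $d$-light initialization with $d = \Theta(\sigma^{1/2})$, greedy path-buying over $S \times S$ pairs still violating the $+4W$ bound, and a potential argument over pairs in $S \times V$. The paper phrases the potential as tracking pairs $(s,v)$ that are ``near-connected'' (some $H$-neighbor $v'$ of $v$ has $\dist_H(s,v') = \dist_G(s,v')$), which is precisely your (source, neighbor) charging; when a path $\pi(s,t)$ with $\ell$ missing edges is added, the $\Omega(d\ell)$ neighbors of $\pi$ all become near-connected to both $s$ and $t$, and if any path-node already had both near-connections the $+4W$ bound would already hold.

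One remark: your ``main obstacle'' is not actually an obstacle. Theorem~\ref{thm:d_initialize} already delivers $\Omega(d\ell)$ \emph{distinct} nodes adjacent to $\pi$ in $H$; the amortization over a single external $x$ lying in many $N^*(u)$ is handled entirely inside the proof of that theorem, and its conclusion is a count of distinct neighbors. Once you invoke it, the charging proceeds exactly as in the unweighted setting, with no additional ``shared-neighbor shortcut'' analysis needed---the two $+W$ losses simply come from the two $H$-edges (each of weight at most $W$) attaching the witnessing neighbors to the path, one on the $s$-side and one on the $t$-side.
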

\begin{proof}
The construction of the $+4W$ subset spanner $H$ is as follows, essentially following~\cite{knudsen2014additive}.
Let $d$ be a parameter of the construction, and let $H$ be a $d$-light initialization of $G$.
Then, while there are nodes $s,t \in S$ such that $\dist_H(s,t) > \dist_G(s, t)+4W$, choose any $s \leadsto t$ shortest path $\pi(s,t)$ in $G$ and add all its edges to $H$.
It is immediate that this algorithm terminates with $H$ a $+4W$ subset spanner of $G$, so we now analyze the number of edges $|E_H|$ in the final subgraph $H$.

At any point in the algorithm, say that an ordered pair of nodes $(s, v) \in S \times V$ is \emph{near-connected} if there exists $v'$ adjacent to $v$ in $H$ such that $\dist_H(s, v') = \dist_G(s, v')$.
We then have the following observation
\begin{equation}\label{eqn:dist-sv}
\dist_H(s, v) \le \dist_H(s, v') + W = \dist_G(s, v')+W.
\end{equation}
When nodes $s, t \in S$ with shortest path $\pi(s, t)$ are considered in the construction, there are two cases:
\begin{enumerate}
    \item If there are two nodes $v', v''$ adjacent in $H$ to a node $v \in \pi(s, t)$, and the pairs $(s,v)$ and $(t,v)$ are near-connected, then we have by triangle inequality and~\eqref{eqn:dist-sv}:
    \begin{align*}
        \dist_H(s, t) &\le \dist_H(s, v) + \dist_H(t,v)\\
        &\le (\dist_G(s,v')+W) + (\dist_G(t,v'')+W)\\
        &= \dist_G(s, v') + \dist_G(t,v'') + 2W\\
        &\le \dist_G(s, v) + \dist_G(t,v) + 4W\\
        &= \dist_G(s, t) + 4W.
    \end{align*}
    where the last equality follows from the optimal substructure property of shortest paths. In this case, the path $\pi(s,t)$ is not added to $H$.
    
    \item Otherwise, suppose there is no node $v'$ adjacent in $H$ to a node $v \in \pi(s, t)$ where $(s,v)$ and $(t,v)$ are near-connected.
    After adding the path $\pi(s, t)$ to $H$, every such node $v'$ becomes near-connected to both $s$ and $t$.
    If there are $\ell$ edges in $\pi(s, t)$ currently missing in $H$, then by Theorem \ref{thm:d_initialize} we have $\Omega(\ell d)$ nodes adjacent to $\pi(s, t)$, so $\Omega(\ell d)$ node pairs in $S \times V$ go from not near-connected to near-connected.
    Since there are $\sigma n$ node pairs in $S\times V$, we add a total of $O(\sigma n / d)$ edges to $H$ in this case.
\end{enumerate}

Putting these together, the final size of $H$ is $|E_H| = O\left(nd + \frac{\sigma n}{d}\right)$.
Setting $d := \sqrt{\sigma}$ proves Theorem~\ref{thm:subspan}.
\qed
\end{proof}

We now give our constructions for pairwise spanners. The following lemma will be useful:
\begin{lemma}[\hspace{1sp}\cite{bodwin2020note}]
\label{lem:constant_probability}
Let $a, b > 0$ be absolute constants, and suppose there is an algorithm that, on input $G, P$, produces a subgraph $H$ on $O(n^a |P|^b)$ edges satisfying
$$\dist_H(s, t) \le \dist_G(s, t) + c$$
for at least a constant fraction of the demand pairs $(s, t) \in P$.
Then there is a $+c$ pairwise spanner $H'$ of $G, P$ on $O(n^a |P|^b)$ edges.
\end{lemma}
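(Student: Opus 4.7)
The plan is a geometric peeling / bootstrapping argument: iterate the given algorithm, remove the satisfied demand pairs after each round, and recurse on what remains until the set of unsatisfied pairs is empty. Let $\alpha \in (0,1)$ denote the constant such that the hypothesized algorithm satisfies at least an $\alpha$-fraction of its input pairs. I would set $P_0 := P$ and $H' := \emptyset$, then for each $i \ge 0$ run the algorithm on $(G, P_i)$ to obtain a subgraph $H_i$ with $O(n^a |P_i|^b)$ edges, add the edges of $H_i$ into $H'$, and define
\[
P_{i+1} := \bigl\{(s,t) \in P_i \;:\; \dist_{H'}(s,t) > \dist_G(s,t) + c\bigr\}.
\]
By hypothesis $|P_{i+1}| \le (1-\alpha)\,|P_i|$, so $P_i = \emptyset$ after $O(\log p)$ iterations and the construction halts.

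For correctness, the key observation is monotonicity of shortest-path distance under edge additions: once a pair is satisfied by some intermediate $H'$, it stays satisfied as we later add more edges. Thus discarding satisfied pairs in the definition of $P_{i+1}$ is safe, and the final $H'$ simultaneously satisfies $\dist_{H'}(s,t) \le \dist_G(s,t) + c$ for every $(s,t) \in P$, i.e.\ it is a $+c$ pairwise spanner of $G, P$.

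For the size bound, I would sum the per-round edge counts using $|P_i| \le (1-\alpha)^i p$:
\[
|E(H')| \;\le\; \sum_{i \ge 0} O\bigl(n^a |P_i|^b\bigr) \;\le\; O(n^a p^b) \sum_{i \ge 0} (1-\alpha)^{bi} \;=\; O(n^a p^b),
\]
where the geometric series converges precisely because $b > 0$ forces $(1-\alpha)^b < 1$. The main (mild) obstacle is verifying this convergence and checking that the constant hidden in the algorithm's $O(n^a |P_i|^b)$ guarantee is uniform across iterations, which is immediate since it depends only on $n$ and $|P_i|$ with a fixed absolute constant. The hypothesis $b > 0$ is essential here: if we allowed $b = 0$, the same argument would give only $O(n^a \log p)$ rather than $O(n^a)$.
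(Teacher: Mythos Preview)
The paper does not supply its own proof of this lemma; it is quoted from \cite{bodwin2020note} and used as a black box. Your geometric peeling argument is correct and is precisely the standard proof of this fact.
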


Using the slack to satisfy only a constant fraction of the demand pairs, we have the following proofs.

\begin{theorem}
\label{theorem:2W_pairwise}
Any graph $G$ with demand pairs $P$ has a $+2W$ pairwise spanner with $O(np^{1/3})$ edges.
\end{theorem}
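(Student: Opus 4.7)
The plan is to extend the template of Theorem~\ref{thm:subspan} to the pairwise $+2W$ setting, using Theorem~\ref{thm:d_initialize} to control the path neighborhoods. Set $d := p^{1/3}$ and initialize $H$ as a $d$-light initialization of $G$; this contributes $O(nd) = O(np^{1/3})$ edges. In a second phase, process the demand pairs $(s, t) \in P$ in some (possibly random) order and, whenever the current $H$ fails the guarantee $\dist_H(s, t) \le \dist_G(s, t) + 2W$, add all edges of a shortest $s$-$t$ path $\pi(s, t)$ in $G$ to $H$. Correctness of the resulting spanner is immediate; the content is in bounding the total cost of the second phase.

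The key structural observation is the following. When the second phase adds $\pi(s, t)$ with $\ell$ missing edges, Theorem~\ref{thm:d_initialize} provides $\Omega(d\ell)$ nodes $v$ adjacent to $\pi(s, t)$ in $H$. For each such $v$ with neighbor $u \in \pi(s, t)$ in $H$, the triangle inequality yields
\[
\dist_H(v, s) \le W + \dist_G(u, s) \le 2W + \dist_G(v, s),
\]
and symmetrically $\dist_H(v, t) \le 2W + \dist_G(v, t)$, where we used $w(u,v) \le W$. Thus every demand pair of the form $(v, s) \in P$ or $(v, t) \in P$ with $v$ among the $\Omega(d\ell)$ adjacent nodes becomes ``for free'' $+2W$-satisfied after this addition, without any additional edges added.

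To convert this into a size bound, I would invoke Lemma~\ref{lem:constant_probability}, reducing the task to satisfying a constant fraction of demand pairs with $O(np^{1/3})$ additional edges. The target charging scheme is: the $i$-th addition of cost $\ell_i$ ``discharges'' $\Omega(d\ell_i)$ new satisfactions of demand pairs touching $s$ or $t$, while the total number of satisfactions across the entire run is at most $p$ since each pair is satisfied at most once. Summing yields $\sum_i d\ell_i = O(p)$, so the second-phase cost is $\sum_i \ell_i = O(p/d) = O(p^{2/3})$, which is $O(np^{1/3})$ since $p \le n^2$ and hence $p^{1/3} \le n$. Combined with the initialization, the final spanner has $O(np^{1/3})$ edges.

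The main obstacle is that the $\Omega(d\ell)$ adjacent nodes $v$ need not correspond to endpoints of demand pairs $(v, s) \in P$ or $(v, t) \in P$; an adversarial choice of $P$ can place its endpoints to evade the adjacency structure, making the naive charging vacuous. To handle this, the plan is to randomize the order in which $P$ is processed and exploit the slack afforded by Lemma~\ref{lem:constant_probability}. Under a uniformly random ordering, a standard probabilistic counting argument should show that, in expectation, a constant fraction of demand pairs interact productively with the adjacency structure at the moment they are processed, so the charging goes through after aggregating across this random order. This randomization step is the weighted analogue of the analysis underlying the unweighted $+2$ pairwise spanner of Kavitha, with Theorem~\ref{thm:d_initialize} taking the role of the unweighted Lemma~\ref{lem:pathnbhd}.
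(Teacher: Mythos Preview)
Your proposal has a genuine gap, and you have correctly identified where it lies: the $\Omega(d\ell)$ nodes adjacent to $\pi(s,t)$ need not be endpoints of any demand pair in $P$, so the charging scheme ``$\ell$ edges added $\Rightarrow$ $\Omega(d\ell)$ demand pairs satisfied'' is vacuous for adversarial $P$. The proposed fix---randomizing the processing order of $P$ and appealing to a ``standard probabilistic counting argument''---does not work. Permuting the order in which demand pairs are processed cannot create demand pairs of the form $(v,s)$ or $(v,t)$ that were never in $P$ to begin with; if $P$ is chosen so that no endpoint is ever among the $d$-light neighbors of any other pair's shortest path, the random order buys nothing. Indeed, notice that if your charging actually went through, the second phase would cost only $O(p/d)=O(p^{2/3})$ edges total, which is strictly better than $O(np^{1/3})$ (and in fact independent of $n$); this is too strong to be plausible and is a signal that the argument overcounts.

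The paper's proof takes a different route that sidesteps this obstacle entirely. It introduces a second parameter $\ell$ and splits demand pairs by how many edges of $\pi(s,t)$ are missing in the $d$-light initialization. Pairs with at most $\ell$ missing edges have their shortest paths added directly, costing $O(\ell p)$ edges. For pairs with more than $\ell$ missing edges, Theorem~\ref{thm:d_initialize} guarantees $\Omega(d\ell)$ nodes adjacent to $\pi(s,t)$, so a uniform random sample $R$ of nodes taken with probability $1/(\ell d)$ hits such a neighbor with constant probability; one then adds a full shortest path tree rooted at each $r\in R$, costing $O(n^2/(\ell d))$ edges. The $+2W$ error comes from detouring through the sampled $r$ adjacent to some $v\in\pi(s,t)$. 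Balancing $nd + \ell p + n^2/(\ell d)$ at $d=p^{1/3}$, $\ell=n/p^{2/3}$ gives $O(np^{1/3})$, and Lemma~\ref{lem:constant_probability} upgrades ``constant probability per pair'' to a genuine pairwise spanner. The key conceptual difference from your attempt is that the random object is a node sample used to plant shortest path trees, not a random ordering of $P$; this decouples the argument from the combinatorial structure of $P$.
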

\begin{proof}
Let $d$ and $\ell$ be parameters of the construction, and let $H$ be a $d$-light initialization of $G$.
For each demand pair $(s, t) \in P$ whose shortest path $\pi(s, t)$ is missing at most $\ell$ edges in $H$, add all edges in $\pi(s, t)$ to $H$.
By Theorem \ref{thm:d_initialize}, any remaining demand pair $(s, t) \in P$ has $\Omega(d\ell)$ nodes adjacent to $\pi(s, t)$.
Let $R$ be a random sample of nodes obtained by including each one independently with probability $1/(\ell d)$; thus, with constant probability or higher, there exists $r \in R$ and $v \in \pi(s, t)$ such that nodes $r$ and $v$ are adjacent in $H$.
Add to $H$ a shortest path tree rooted at each $r \in R$.
We then compute:
\begin{align*}
    \dist_H(s, t) &\leq \dist_H(s, r) + \dist_H(r, t)\\
    &= \dist_G(s, r) + \dist_G(r, t)\\
    &\le \dist_G(s, v) + \dist_G(v, t) + 2W\\
    &= \dist_G(s, t) + 2W.
\end{align*}
The distance for each pair $(s, t) \in P$ is approximately preserved in $H$ with at least a constant probability, which is sufficient for Lemma~\ref{lem:constant_probability}.
The number of edges in the final subgraph $H$ is
$$|E(H)| = O(nd + \ell p + n^2/(\ell d));$$
setting $\ell = n/p^{2/3}$ and $d = p^{1/3}$ proves Theorem~\ref{theorem:2W_pairwise}.
\qed
\end{proof}

\begin{theorem}
\label{theorem:additive_4}
Any graph $G$ with demand pairs $P$ has a $+4W$ pairwise spanner with $O(np^{2/7})$ edges.
\end{theorem}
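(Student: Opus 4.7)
My plan is to mimic the construction of the unweighted $+4$ pairwise spanner of \cite{Kavitha2017, Kavitha15}, substituting a $d$-light initialization for a $d$-initialization and invoking Theorem~\ref{thm:d_initialize} wherever Lemma~\ref{lem:pathnbhd} was used in the unweighted analysis. As in Theorems~\ref{thm:subspan} and \ref{theorem:2W_pairwise}, each unit of unweighted $+1$ additive error maps to $+W$ weighted additive error, turning $+4$ into $+4W$.

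The construction proceeds in three stages, parameterized by $d$ and $\ell$. First, compute a $d$-light initialization $H$ of $G$ at cost $O(nd)$. Second, for each demand pair $(s,t) \in P$ whose shortest path $\pi(s,t)$ is missing at most $\ell$ edges in $H$, add all edges of $\pi(s,t)$ to $H$, at cost $O(\ell p)$. Third, for each remaining demand pair, Theorem~\ref{thm:d_initialize} guarantees $\Omega(d\ell)$ nodes adjacent to $\pi(s,t)$ in $H$, so a random subset $R \subseteq V$ sampled with probability $\Theta(1/(d\ell))$ satisfies, with constant probability, that some $r \in R$ is adjacent to $\pi(s,t)$ via a $d$-lightest edge. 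Rather than adding a full shortest-path tree from each $r \in R$ (as in Theorem~\ref{theorem:2W_pairwise}, which would recover $+2W$), I would exploit the additional $+2W$ of slack by adding only a sparser auxiliary structure that approximately preserves the distances from $R$ to the endpoints of demand pairs.

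Concretely, the auxiliary structure can be built either by recursively applying Theorem~\ref{theorem:2W_pairwise} to the demand pairs $R \times \mathrm{endpts}(P)$, or by a second level of light initialization and sampling that takes advantage of the $+2W$ slack to save edges per sampled node. Combining the near-connectedness analysis of Theorem~\ref{theorem:2W_pairwise} with Lemma~\ref{lem:constant_probability} to boost per-pair success to all of $P$, and balancing the three contributions $O(nd)$, $O(\ell p)$, and the auxiliary cost with a choice along the lines of $d = p^{2/7}$ and $\ell = n\, p^{-5/7}$, should yield the desired total bound of $O(np^{2/7})$ edges.

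The main technical obstacle is ensuring that the errors combine to give $+4W$ rather than $+6W$ overall: routing $s \to r \to t$ through a sampled $r \in R$ already contributes $+2W$ from the one adjacency to $\pi(s,t)$, so the auxiliary structure can afford only $+W$ error per side when invoked as a subroutine. Designing such a ``$+W$-per-side'' structure for $R$-to-endpoints distances, or alternatively performing a more careful charging argument that pools the error across both sides (for instance by pivoting through a node that is already near-connected to both $s$ and $t$, in the sense used in the proof of Theorem~\ref{thm:subspan}), is where the bulk of the new work lies; the remaining pieces mirror Theorem~\ref{theorem:2W_pairwise} almost verbatim, now leaning on Theorem~\ref{thm:d_initialize} in place of Lemma~\ref{lem:pathnbhd}.
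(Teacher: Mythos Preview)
Your proposal correctly identifies the main obstacle---routing through a single sampled node $r$ adjacent to $\pi(s,t)$ already costs $+2W$, leaving no slack for additive error in the auxiliary structure---but does not resolve it. Neither of your concrete suggestions closes the gap: recursively applying Theorem~\ref{theorem:2W_pairwise} to $R \times \mathrm{endpts}(P)$ contributes $+2W$ on each side and hence $+6W$ total, and the ``$+W$-per-side'' structure you posit is not known to exist. The near-connectedness pivot from Theorem~\ref{thm:subspan} is also not obviously adaptable to the pairwise $O(np^{2/7})$ regime. In short, you have the skeleton but are missing the key construction.

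The paper's construction differs qualitatively from your outline. It uses a \emph{three-way} split of demand pairs by number of missing edges: at most $\ell$, between $\ell$ and $n/d^2$, and at least $n/d^2$. Long pairs are handled by shortest-path trees from a sparse sample $R_1$ exactly as in Theorem~\ref{theorem:2W_pairwise}. The crucial idea is for the intermediate pairs: add only the \emph{prefix} and \emph{suffix} of $\pi(s,t)$ (the first and last $\ell$ missing edges), sample $R_2$ at rate $1/(\ell d)$, and for every pair $r,r' \in R_2$ add the shortest $r \leadsto r'$ path among those missing at most $n/d^2$ edges. With constant probability some $r$ is adjacent to a prefix node $v$ and some $r'$ to a suffix node $v'$; the route $s \to v \to r \to r' \to v' \to t$ incurs $+W$ at each of the hops $(v,r)$ and $(r',v')$, and the added $r \leadsto r'$ path has length at most $\dist_G(v,v') + 2W$ because $r \circ \pi[v \leadsto v'] \circ r'$ is itself a candidate with at most $n/d^2$ missing edges. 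This totals $+4W$. The edge count $O(nd + p\ell + n^3/(\ell^2 d^4))$ balances at $d = p^{2/7}$, $\ell = n/p^{5/7}$. The two-sampled-point prefix/suffix mechanism, not a single pivot with a sparser auxiliary structure, is what makes the error pool correctly.
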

\begin{proof}
%We apply Lemma~\ref{lemma:CSP} with $\alpha_1 = n, c = 2/7, \alpha_2 = 0$. 
Let $d$ and $\ell$ be parameters of the construction, and let $H$ be a $d$-light initialization of $G$.
For each demand pair $(s, t) \in P$ whose shortest path $\pi(s, t)$ is missing at most $\ell$ edges in $H$, add all edges in $\pi(s, t)$ to $H$.
To handle each $(s, t) \in P$ whose shortest path $\pi(s, t)$ is missing at least $n/d^2$ edges in $H$, we let $R_1$ be a random sample of nodes obtained by including each node independently with probability $d^2/n$, then add a shortest path tree rooted at each $r \in R_1$ to $H$.
By an identical analysis to Theorem \ref{theorem:2W_pairwise}, for each such pair, with constant probability or higher we have
$$\dist_H(s, t) \le \dist_G(s, t) + 2W.$$
Finally, we consider the ``intermediate'' pairs $(s, t) \in P$ whose shortest path $\pi(s, t)$ is missing more than $\ell$ but fewer than $n/d^2$ edges in $H$.
We add the first and last $\ell$ missing edges in $\pi(s, t)$ to the spanner; we will refer to the \emph{prefix} (resp.\ \emph{suffix}) of $\pi(s,t)$ to mean the shortest prefix (suffix) containing these $\ell$ missing edges. By Theorem \ref{thm:d_initialize}, there are $\Omega(\ell d)$ nodes adjacent to the prefix and $\Omega(\ell d)$ nodes adjacent to the suffix.
Let $R_2$ be a random sample of nodes obtained by including each node with probability $1/(\ell d)$, and for each pair $r, r' \in R_2$, add to $H$ all edges in the shortest $r \leadsto r'$ path in $G$ among the paths that are missing at most $n/d^2$ edges (ignore any pair $r, r'$ if no such path exists).
With constant probability or higher, we sample $r, r'$ adjacent to nodes $v, v'$ in the prefix, suffix respectively, in which case we have:
\begin{align*}
        \dist_H(s, t) &\leq \dist_H(s, v) + \dist_H(v, v') + \dist_H(v', t)\\
        &= \dist_G(s, v) + \dist_H(v, v') + \dist_G(v', t)\\
        &\le \dist_G(s, v) + \dist_H(r, r') + 2W + \dist_G(v', t).
\end{align*}
Notice that $\dist_H(r, r') \le 2W + \dist_G(v, v')$, due to the existence of the path $r \circ \pi(s, t)[v, v'] \circ r'$ which is indeed missing $\le n/d^2$ edges.
Thus we may continue:
\begin{align*}
        &\le \dist_G(s, v) + \dist_G(v, v') + 4W + \dist_G(v', t)\\
        & = \dist_G(s, t) + 4W.
\end{align*}

The distance for each pair $(s, t) \in P$ is approximately preserved in $H$ with at least constant probability, which again suffices by Lemma \ref{lem:constant_probability}, and the number of edges in $H$ is
$$|E(H)| = O\left(nd + p\ell + n^3/(\ell^2d^4) \right).$$
Setting $\ell = n/p^{5/7}$ and $d = p^{2/7}$ completes the proof of Theorem~\ref{theorem:additive_4}.
\qed
\end{proof}

\begin{theorem}
\label{theorem:additive_6}
Any graph $G$ with demand pairs $P$ has a $+8W$ pairwise spanner containing $O(np^{1/4})$ edges.
\end{theorem}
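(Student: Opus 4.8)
The plan is to prove this as the weighted analogue of the unweighted $+6$ pairwise spanner of size $O(np^{1/4})$ \cite{Cygan13, Kavitha2017, Pettie09}, following the recipe of Theorems~\ref{theorem:2W_pairwise} and~\ref{theorem:additive_4} but with the extra ingredient of a \emph{clustering}. Fix $d := p^{1/4}$ and let $H$ start as a $d$-light initialization of $G$ (already $O(nd) = O(np^{1/4})$ edges); in addition, group the high-degree nodes of $G$ into $O(n/d)$ clusters, each a center together with some of its lightest incident edges, and add to $H$ one edge from every clustered node to its center. The two workhorses are Lemma~\ref{lem:constant_probability} (so it suffices to fix a constant fraction of the demand pairs, each with constant probability) and Theorem~\ref{thm:d_initialize}, which says a shortest path missing $\ell$ edges in $H$ has $\Omega(d\ell)$ neighbors in $H$ — hence touches $\Omega(\ell)$ distinct clusters — so a sample of nodes taken at rate $\Theta(1/(d\ell))$ contains a neighbor of any such path with constant probability.

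Setting $\ell := n/p^{3/4}$, I would: (i) add $\pi(s,t)$ outright for every demand pair whose shortest path misses at most $\ell$ edges, costing $O(p\ell) = O(np^{1/4})$ edges; (ii) for each remaining pair add the first $\ell$ and the last $\ell$ missing edges of $\pi(s,t)$ — its \emph{prefix} and \emph{suffix} — again $O(p\ell)$ edges, noting that each prefix and suffix is itself a shortest path missing exactly $\ell$ edges, so Theorem~\ref{thm:d_initialize} applies to it; and (iii) sample a node set $R$ at rate $\Theta(1/(d\ell))$ and, for each $r \in R$, add to $H$ a shortest path tree of the \emph{cluster graph} rooted at $r$'s cluster, each realized by $O(n/d)$ edges of $G$, so that the $|R| = O(n/(d\ell))$ trees total $O(n^2/(d^2\ell)) = O(np^{1/4})$ edges. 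For a non-short pair $(s,t)$ one then argues that, with constant probability, some $r \in R$ is adjacent to a prefix node $v$ and some $r' \in R$ is adjacent to a suffix node $v'$; since the prefix and suffix edges lie in $H$ we have $\dist_H(s,v) = \dist_G(s,v)$ and $\dist_H(v',t) = \dist_G(v',t)$, and routing from $v$ through $r$, the center of $r$'s cluster, the cluster-graph tree, the center of $r'$'s cluster, and $r'$ to $v'$ gives $\dist_H(v,v') \le \dist_G(v,v') + 8W$, whence $\dist_H(s,t) \le \dist_G(s,t) + 8W$.

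The constant $8$ — and the ``technical reason'' the introduction flags for not matching the unweighted $+6$ — is exactly this: the reroute pays $W$ for each of $v\to r$ and $r'\to v'$, $W$ for each of the two hops to the cluster centers, and a further $2W$ when the exact cluster-graph distance between those centers is bounded in terms of $\dist_G(v,v')$ by reversing those center hops; in the unweighted setting one sidesteps two of these $W$'s by sampling cluster centers directly, a move the weighted reduction spoils. I expect the main obstacle to be the clustering bookkeeping rather than the distance or edge-count estimates: one must set up the clusters and the node-to-center edges so that (a) cluster-graph distances exceed the corresponding $G$-distances by only an additive $O(W)$, (b) every pendant edge used in the reroute genuinely lies in $H$ (which it should, since an edge witnessing adjacency to $\pi$ is among some node's $d$ lightest and the node-to-center edge was added explicitly), and (c) a single pass through one cluster-graph tree suffices, i.e., the number of detours — and hence the additive error — stays constant even though a non-short path can be missing far more than $\ell$ edges. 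Point (c) is precisely where Theorem~\ref{thm:d_initialize}, applied to the $\ell$-prefix and $\ell$-suffix rather than to the whole path, does the work, and it is the crux of keeping the error at $+8W$ instead of something growing with path length.
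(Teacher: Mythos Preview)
Your plan diverges from the paper at the central step. The paper does \emph{not} introduce clusters or cluster-graph trees at all. After the $d$-light initialization, the short-path and prefix/suffix steps (your (i) and (ii), with the same $d=p^{1/4}$, $\ell=n/p^{3/4}$), it samples $R$ at rate $\Theta(1/(\ell d))$ and then simply invokes Theorem~\ref{thm:subspan}: it adds a $+4W$ \emph{subset spanner} on the node set $R$. That costs $O(n|R|^{1/2})=O(n^{3/2}/\sqrt{\ell d})=O(np^{1/4})$ edges, and the error telescopes as
\[
\dist_H(v,v') \le \dist_H(r,r')+2W \le \bigl(\dist_G(r,r')+4W\bigr)+2W \le \dist_G(v,v')+8W,
\]
the three increments being the hops $v\to r,\ r'\to v'$; the subset spanner's own $+4W$ guarantee; and the reverse hops bounding $\dist_G(r,r')$ by $\dist_G(v,v')+2W$. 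So the ``technical reason'' for $+8W$ rather than $+6W$ is that the black-box subset spanner available in the weighted setting is $+4W$ (Theorem~\ref{thm:subspan}), not $+2W$.

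Your alternative route has a genuine gap, and it is exactly the point you flag as (a). Realizing a cluster-graph tree path from $c_r$ to $c_{r'}$ in $H$ alternates inter-cluster edges with intra-cluster detours through centers; each detour costs up to $2W$, and the number of hops along the cluster-graph shortest path is not bounded by any constant. To get $\dist_H(c_r,c_{r'}) \le \dist_G(c_r,c_{r'}) + O(W)$ you would need the $G$-shortest path between the two centers to induce only $O(1)$ cluster changes --- but in weighted graphs a shortest path can enter and leave the same cluster's neighborhood arbitrarily many times, which is precisely the failure mode of Lemma~\ref{lem:pathnbhd} illustrated in Figure~\ref{fig:cenbhd}. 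This is not bookkeeping; it is the same obstruction that forced Theorem~\ref{thm:d_initialize} in the first place, now reappearing one level up. The paper's approach sidesteps it because Theorem~\ref{thm:subspan} has already absorbed Theorem~\ref{thm:d_initialize} into its own path-buying analysis, delivering a fixed $+4W$ with no per-hop accumulation.
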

\begin{proof}
Let $\ell,d$ be parameters of the construction and let $H$ be a $d$-light initialization of $G$.
For each $(s, t) \in P$ whose shortest path $\pi(s, t)$ is missing $\le \ell$ edges in $H$, add all edges in $\pi(s, t)$ to $H$.
Otherwise, like before, we add the first and last $\ell$ missing edges of $\pi(s, t)$ to $H$ (prefix and suffix).
Then, randomly sample a set $R$ by including each node with probability $1/(\ell d)$, and use Theorem~\ref{thm:subspan} to add a $+4W$ subset spanner on the nodes in $R$.
By Theorem~\ref{thm:d_initialize}, the prefix and suffix each have $\Omega(\ell d)$ adjacent nodes.
Thus, with constant probability or higher, we sample $r, r' \in R$ adjacent to $v, v'$ in the added prefix and suffix respectively.
We then compute:
\begin{align*}
    \dist_H(s, t) & \leq \dist_H(s, v) + \dist_H(v, v') + \dist_H(v', t)\\
    &\le \dist_G(s, v) + \dist_H(v, v') + \dist_G(v', t)\\
    &\le \dist_G(s, v) + \dist_H(r, r') + 2W + \dist_G(v', t)\\
    &\le \dist_G(s, v) + \dist_G(r, r') + 6W + \dist_G(v', t)\\
    &\le \dist_G(s, v) + \dist_G(v, v') + 8W + \dist_G(v', t)\\
    &= \dist_G(s, t) + 8W.
\end{align*}
Again, the distance for each pair $(s, t) \in P$ is approximately preserved in $H$ with at least constant probability, which suffices by Lemma~\ref{lem:constant_probability}.
The number of edges in $H$ is
$$|E(H)|=O\left(nd + p\ell + n^{3/2}/\sqrt{\ell d}\right).$$
Setting $\ell = n/p^{3/4}$ and $d = p^{1/4}$ completes the proof of Theorem~\ref{theorem:additive_6}.
\qed
\end{proof}

\section{All-pairs Additive Spanners}
\label{section:all_pairs}

We now turn to the \emph{all-pairs} setting, i.e., demand pairs $P = V \times V$.
These very closely follow the associated constructions in the unweighted setting.

\begin{theorem} \label{thm:2W_allpairs}
Every $n$-node graph has a $+2W$ additive spanner on $O(n^{3/2})$ edges.
\end{theorem}
\begin{proof}
There are two steps in the construction of the spanner $H$.
Initially $H$ is empty (not an initialization).
First, we take a random sample of nodes $R$ by including each one independently with probability $1/d$, and we add to $H$ a shortest path tree rooted at each $r \in R$.
Then, a new step: say that a vertex $v$ is \emph{hit} if we sample $r \in R$ adjacent to $v$ in the original input graph $G$, or it is \emph{missed} otherwise.
For each missed vertex $v$, add all edges incident to $v$ to the spanner.

First let us count the number of edges in the final spanner $H$.
We have $O(n^2/d)$ edges (in expectation) from the random BFS trees.
A node of degree $\Delta \ge d$ is hit with probability $\ge \Theta(d/\Delta)$, and thus it costs $O(d)$ edges in expectation in the final step, where we add $\Delta$ edges if it is missed.
So this costs an additional $O(nd)$ edges in the spanner, in expectation.
The claimed size bound of $O(n^{3/2})$ then follows by setting $d = n^{1/2}$.

The proof of correctness of the spanner is essentially the same as in Theorem \ref{theorem:2W_pairwise}.
Consider a node pair $s, t$ and a shortest path $\pi(s, t)$ between them.
If all edges in $\pi(s, t)$ are present in the spanner $H$, then we have $\dist_G(s, t) = \dist_H(s, t)$.
Otherwise, let $(u, v)$ be a missing edge in $\pi(s, t)$.
Thus $u$ is hit, since we did not add all of its incident edges to the spanner.
So we sampled a node $r \in R$ adjacent to $u \in \pi(s, t)$.
It now follows from exactly the same logic as Theorem \ref{theorem:2W_pairwise} that $\dist_H(s, t) \le \dist_G(s, t) + 2W$.
\end{proof}

\begin{theorem}
Every $n$-node graph has a $+4W$ spanner on $\Oish(n^{7/5})$ edges.
\end{theorem}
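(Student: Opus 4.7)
The plan is to mirror the argument the authors already gave for the $+2W$ all-pairs bound, now applied to Theorem \ref{theorem:additive_4} as the underlying pairwise construction. First I would verify that the construction in Theorem \ref{theorem:additive_4} is demand-oblivious in the sense required by Lemma \ref{lemma:saturated}. The $d$-light initialization depends only on $G$; both random sets $R_1$ and $R_2$ are sampled independently with probabilities ($d^2/n$ and $1/(\ell d)$) that depend only on $n$ and $|P|$ through $d = |P|^{2/7}$ and $\ell = n/|P|^{5/7}$. Once those random bits are fixed, the path $\pi(s,t)$ assigned to each ordered pair $(s,t)$---whether it is the shortest path itself (short case), a path routed through some $r \in R_1$ via a shortest path tree (long case), or the prefix/suffix of the shortest path concatenated with a path through $r,r' \in R_2$ (intermediate case)---is a function of $G$ and $|P|$ alone. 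Crucially, this function can be defined uniformly for \emph{all} pairs of nodes, not only for those in $P$, since none of the choices inspects the contents of $P$.

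Next I would compute the saturation threshold $p^*$. The pairwise size bound is $O(n|P|^{2/7})$, and this is strictly less than $|P|$ precisely when $|P| = \Omega(n^{7/5})$ with a sufficiently large hidden constant absorbing the $O(\cdot)$. Taking $p^* = \Theta(n^{7/5})$ to be exactly this threshold, any set of $|P| \ge p^*$ demand pairs satisfies
\[
\Bigl| \bigcup_{(s,t) \in P} \pi(s,t) \Bigr| \;\le\; O(n|P|^{2/7}) \;<\; |P|,
\]
which is the third hypothesis of Lemma \ref{lemma:saturated}. The first hypothesis (length $\le \dist_G(s,t) + 4W$) is exactly the stretch bound analyzed inside Theorem \ref{theorem:additive_4}, and the second (oblivious dependence) is the observation above.

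Applying Lemma \ref{lemma:saturated} with $k = 4W$ and this $p^*$ then yields an all-pairs $+4W$ spanner on at most $p^* = O(n^{7/5})$ edges, which is the claim. I expect the only nontrivial step to be the demand-obliviousness check, and even that is essentially a line-by-line inspection of Theorem \ref{theorem:additive_4}'s pseudocode: every decision either references $G$ directly or depends on random bits whose distribution is controlled entirely by $n$ and $|P|$. The remainder is the same parameter bookkeeping used in the $+2W$ case already handled earlier in this section.
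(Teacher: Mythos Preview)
Your proposal is correct and follows essentially the same approach as the paper: the paper's own proof is literally the one line ``Identical logic applied to Theorems \ref{theorem:additive_4} and \ref{theorem:additive_6} gives,'' referring back to the demand-obliviousness observation and the threshold computation already spelled out for the $+2W$ case. Your more detailed verification of demand-obliviousness and the explicit solving of $O(n|P|^{2/7}) < |P|$ for $p^* = \Theta(n^{7/5})$ are exactly what the paper means by ``identical logic.''
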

\begin{proof}
The construction is essentially the same as Theorem \ref{theorem:additive_4}, except that the parameter $\ell$ no longer plays a role, and we skip the step where we add shortest paths $\pi(s, t)$ that are missing $\le \ell$ edges in the initialized spanner.
In total, the construction is:
\begin{itemize}
    \item Start the spanner $H$ as a $d$-light initialization of the input graph $G$,
    \item Let $R_1$ be a random sample of nodes obtained by including each node independently with probability $d^2/n$, then add a shortest path tree rooted at each $r \in R_1$ to $H$, and
    \item Let $R_2$ be a random sample of nodes obtained by including each node with probability $1/d$.  For each pair of nodes $r, r' \in R_2$, add to $H$ the shortest $r \leadsto r'$ path among the paths that are missing $\le n/d^2$ edges in the initialization (ignore the pair $r, r'$ if no such path exists).
\end{itemize}
This construction has $O(nd + n^3 / d^4)$ edges in total, which is $O(n^{7/5})$ by setting $d = n^{2/5}$.
By the same logic as in Theorem \ref{theorem:additive_4}, each demand pair $(s, t) \in V \times V$ is satisfied in $H$ with constant probability.
Thus, if we repeat the construction $C \log n$ times for a large enough absolute constant $C$, then at the end every demand pair is satisfied with high probability.
Unioning together the spanners computed in each round, the final size is $O(n^{7/5} \cdot \log n) = \Oish(n^{7/5})$.
\end{proof}

\begin{theorem}
Every $n$-node graph has a $+8W$ additive spanner on $O(n^{4/3})$ edges.
\end{theorem}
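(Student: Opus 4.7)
The plan is to mirror the proofs of the preceding two all-pairs theorems in this section, applying Lemma~\ref{lemma:saturated} to the pairwise construction from Theorem~\ref{theorem:additive_6}. The pairwise $+8W$ construction produces a subgraph on $O(np^{1/4})$ edges, and I want to find a threshold $p^*$ beyond which this bound falls below $p$, so that Lemma~\ref{lemma:saturated} converts the pairwise guarantee into an all-pairs one of size $p^*$.

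First I would verify that the construction underlying Theorem~\ref{theorem:additive_6} is demand-oblivious in the sense required by Lemma~\ref{lemma:saturated}. The construction involves: a $d$-light initialization of $G$ (depends only on $G$), the addition of prefix and suffix edges along shortest paths $\pi(s,t)$ for each demand pair (where $\pi(s,t)$ can be chosen canonically from $G$ alone, e.g.\ by a fixed tiebreaking rule), and the addition of a $+4W$ subset spanner on a random set $R$ sampled with probability $1/(\ell d)$. Both $\ell$ and $d$ are set as functions of $|P|$ alone, and the random sampling depends only on $n$ and these parameters; the subset spanner construction in Theorem~\ref{thm:subspan} is itself deterministic given $R$ and $G$. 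Therefore the path that the argument implicitly associates to each $(s,t)$ (namely the concatenation of its prefix, a short path in the subset spanner on $R$, and its suffix) depends only on $G$ and $|P|$, not on the contents of $P$.

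Next I would solve the threshold equation $n (p^*)^{1/4} < p^*$, which rearranges to $(p^*)^{3/4} > n$, i.e., $p^* = \Theta(n^{4/3})$ with a sufficiently large hidden constant. By Theorem~\ref{theorem:additive_6}, for any $|P| \ge p^*$ demand pairs the associated demand-oblivious paths $\pi(s,t)$ (of length at most $\dist_G(s,t) + 8W$) collectively use fewer than $|P|$ edges. Lemma~\ref{lemma:saturated} then yields an all-pairs $+8W$ additive spanner on $O(n^{4/3})$ edges, completing the proof.

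The main obstacle, and the only nontrivial check, is the demand-obliviousness: one must be careful that the ``first and last $\ell$ missing edges'' of $\pi(s,t)$, the sampled sets $R$, and the recursively invoked subset spanner are all defined as functions of $(G, |P|)$ and not of $P$ itself. Once this is confirmed (and it follows the same pattern already discussed in~\cite{bodwin2020note} and used implicitly for Theorems~\ref{theorem:2W_pairwise} and~\ref{theorem:additive_4} in this section), the rest is an elementary threshold calculation identical in spirit to those used to derive the $+2W$ and $+4W$ all-pairs bounds above.
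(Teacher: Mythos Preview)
Your proposal is correct and follows exactly the paper's approach: the paper simply states that ``identical logic'' to the $+2W$ case (i.e., applying Lemma~\ref{lemma:saturated} to the demand-oblivious pairwise bound of Theorem~\ref{theorem:additive_6}) gives the result, and your threshold calculation $O(np^{1/4}) < p$ for $p^* = \Theta(n^{4/3})$ is precisely what is intended. If anything, you have spelled out the demand-obliviousness check in more detail than the paper itself.
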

\begin{proof}
Initially, the spanner $H$ is empty.
We take a random sample of nodes $R$ by including each one independently with probability $1/d$, and we add a $+4W$ subset spanner on $R \times R$.
Then, we again say that a vertex $v$ is \emph{hit} if we sample $r \in R$ adjacent to $v$ in the original input graph $G$, or it is \emph{missed} otherwise.
For each missed vertex $v$, add all edges incident to $v$.
For each hit vertex $v$, add an edge $(v, r)$ to one node $r \in R$.

The size of the spanner is $O(nd)$ in expectation from the missed verticest (by the same argument as in Theorem \ref{thm:2W_allpairs}), and $O(n(n/d)^{1/2}) = O(n^{3/2} / d^{1/2})$ in expectation from the subset spanner.
The claimed size bound follows by setting $d = n^{1/3}$.

Finally, correctness follows by basically the same argument as in Theorem \ref{theorem:additive_6}.
Consider a node pair $s, t$, and let $\pi(s, t)$ be a shortest path.
If all edges in $\pi(s, t)$ are present in the spanner $H$, then we have $\dist_H(s, t) = \dist_G(s, t)$.
Otherwise, let $(u, v)$ be the first missing edge on $\pi(s, t)$ and let $(x, y)$ be the last missing edge on $\pi(s, t)$.
Thus $u, y$ are both hit, since we did not add all their incident edges to the spanner.
So there are nodes $r, r' \in R$ adjacent to $u, t$ in the spanner $H$.
Using the $+4W$ subset spanner, this implies that $\dist_H(s, t) \le \dist_G(s, t) + 8W$, by the same calculation as in Theorem \ref{theorem:additive_6}.
\end{proof}

\section{Conclusions and Open Problems}

We have shown that most important unweighted additive spanner constructions have natural weighted analogues.
At present, the exceptions are the $+4W$ subset spanner on $O(n |S|^{1/2})$ edges (which should probably have only $+2W$ error) and the $+8W$ all-pairs/pairwise spanners (which should probably have only $+6W$ error).
Closing these error gaps is an interesting open problem.
It would also be interesting to obtain weighted analogues of related concepts, most notably, the Thorup-Zwick emulators~\cite{TZ06}, which are optimal~\cite{ABP17} in essentially the same way that the 6-additive spanner on $O(n^{4/3})$ edges is optimal.

Finally, as mentioned earlier, it would be interesting to find constructions of \emph{purely} additive spanners parametrized by some other statistic besides the maximum edge weight $W$; a natural parameter is $W(u,v)$, the maximum edge weight along a shortest $u$-$v$ path.

%. We then showed that for every set $P$ with $|P| = p$ vertex demand pairs, there are pairwise $2W$-, $4W-$, and $8W$-spanners on $O(np^{1/3})$, $O(np^{2/7})$, and $O(np^{1/4})$ edges respectively. We then showed that every graph has additive $2W$-,  $4W$-, and $8W$-spanners on $O(n^{3/2})$, $O(n^{7/5})$, and $O(n^{4/3})$ edges respectively. The existence of $6W$ additive spanners remains an interesting open problem. Another point is, the bound we have given is about the number of edges, but not how close to optimal solution it is. Finally, can the results be generalized to the multilevel version?
%todo I'm ok with talking about multilevel as long as we give a concrete unsolved problem.  Right now I'm not sure what this sentence could mean. -G

\bibliography{references}
\end{document}